\newcommand{\nc}{\newcommand}
\nc{\rnc}{\renewcommand}
\newcommand{\bra}[1]{\left\langle #1\right|}
\newcommand{\ket}[1]{\left|#1\right\rangle}
\newcommand{\proj}[1]{\left|#1\right\rangle\left\langle #1\right|}
\DeclareMathOperator{\id}{id}
\DeclareMathOperator{\Par}{Par}
\DeclareMathOperator{\poly}{poly}
\DeclareMathOperator{\tr}{tr}
\DeclareMathOperator{\Sym}{Sym}
\DeclareMathOperator{\Sch}{Sch}
\def\be#1\ee{\begin{equation}#1\end{equation}}
\def\bea#1\eea{\begin{eqnarray}#1\end{eqnarray}}
\def\beas#1\eeas{\begin{eqnarray*}#1\end{eqnarray*}}
\def\ba#1\ea{\begin{align}#1\end{align}}
\def\bas#1\eas{\begin{align*}#1\end{align*}}
\def\bpm#1\epm{\begin{pmatrix}#1\end{pmatrix}}
\def\eq#1{(\ref{eq:#1})}
\def\eqs#1#2{(\ref{eq:#1}) and (\ref{eq:#2})}
\def\ra{\rightarrow}
\def\ot{\otimes}
\nc{\grad}{{\vec{\nabla}}}
\newcommand{\iso}[1]{\stackrel{#1}{\cong}}
\def\cD{\mathcal{D}}
\def\cE{\mathcal{E}}
\def\cP{\mathcal{P}}
\def\cQ{\mathcal{Q}}
\def\cS{\mathcal{S}}
\def\cU{\mathcal{U}}
\def\bbC{\mathbb{C}}
\DeclareMathOperator*{\E}{\mathbb{E}}
\def\benum{\begin{enumerate}}
\def\eenum{\end{enumerate}}
\def\bit{\begin{itemize}}
\def\eit{\end{itemize}}
\newcommand{\fig}[1]{Fig.~\ref{fig:#1}}
\newcommand{\lemref}[1]{Lemma~\ref{lem:#1}}
\nc{\todo}[1]{\textcolor{red}{todo: #1}}
\def\begsub#1#2\endsub{\begin{subequations}\label{eq:#1}\begin{align}#2\end{align}\end{subequations}}
\nc\qand{\qquad\text{and}\qquad}
\nc\mnb[1]{\medskip\noindent{\bf #1}}
\newtheorem{definition}{Definition} 
\newtheorem{prop}[definition]{Proposition}
\newtheorem{lemma}[definition]{Lemma}
\newtheorem{thm}[definition]{Theorem}
\newtheorem*{rep@theorem}{\rep@title}
\newcommand{\newreptheorem}[2]{%
\newenvironment{rep#1}[1]{%
 \def\rep@title{#2 \ref{##1} (restatement)}%
 \begin{rep@theorem}}%
 {\end{rep@theorem}}}
\def\ba#1\ea{\begin{align}#1\end{align}}
\def\ban#1\ean{\begin{align*}#1\end{align*}}
\def\benum{\begin{enumerate}}
\def\eenum{\end{enumerate}}
\def\squareforqed{\hbox{\rlap{$\sqcap$}$\sqcup$}}
\def\qed{\ifmmode\squareforqed\else{\unskip\nobreak\hfil
\penalty50\hskip1em\null\nobreak\hfil\squareforqed
\parfillskip=0pt\finalhyphendemerits=0\endgraf}\fi}
\def\endenv{\ifmmode\;\else{\unskip\nobreak\hfil
\penalty50\hskip1em\null\nobreak\hfil\;
\parfillskip=0pt\finalhyphendemerits=0\endgraf}\fi}
\def\id{{\operatorname{id}}}
\def\eps{\epsilon}
\mathchardef\ordinarycolon\mathcode`\:
\def\vcentcolon{\mathrel{\mathop\ordinarycolon}}
\nc{\avg}[1]{\langle#1\rangle}
\nc{\smfrac}[2]{\mbox{$\frac{#1}{#2}$}} \nc{\Tr}{\operatorname{Tr}}
\nc{\ox}{\otimes} \nc{\dg}{\dagger} \nc{\dn}{\downarrow}
\nc{\lmax}{\lambda_{\text{max}}}
\nc{\lmin}{\lambda_{\text{min}}}
\nc{\csupp}{{\operatorname{csupp}}}
\nc{\qsupp}{{\operatorname{qsupp}}} \nc{\var}{\operatorname{var}}
\nc{\rar}{\rightarrow} \nc{\lrar}{\longrightarrow}
\nc{\Lip}{\operatorname{Lip}}
\nc{\Om}{\Omega}
\nc{\wt}[1]{\widetilde{#1}}
\nc{\glneq}{{\raisebox{0.6ex}{$>$}  \hspace*{-1.8ex} \raisebox{-0.6ex}{$<$}}}
\nc{\gleq}{{\raisebox{0.6ex}{$\geq$}\hspace*{-1.8ex} \raisebox{-0.6ex}{$\leq$}}}
\nc{\vholder}[1]{\rule{0pt}{#1}}
\nc{\wh}[1]{\widehat{#1}}
\nc{\h}[1]{\widehat{#1}}
\nc{\ob}[1]{#1}
\def\beq{\begin {equation}}
\def\eeq{\end {equation}}
\begin{document}

\title{{\Large Quantum Conditional Mutual Information, Reconstructed States, and State Redistribution}}

\author{Fernando G.S.L. Brand\~ao$^{1}$$^,$$^{5}$, Aram W. Harrow$^{2}$, Jonathan Oppenheim$^{3}$, Sergii Strelchuk$^{4}$}
\affiliation{$^{1}$ Department of Computer Science, University College London, Gower St, London WC1E 6BT, UK \\
$^{2}$ Center for Theoretical Physics, Massachusetts Institute of Technology,
Cambridge MA 02139, USA \\
$^{3}$ Department of Physics, University College London, Gower St, London WC1E 6BT, UK \\
$^{4}$ Department of Applied Mathematics and Theoretical Physics, University of Cambridge, Wilberforce Road, Cambridge CB3 0WA, UK
$^{5}$Quantum Architectures and Computation Group, Microsoft Research, Redmond, WA 98052-6399, USA
}

\begin{abstract}

We give two strengthenings of an inequality for the quantum conditional
mutual information of a tripartite quantum state recently proved by
Fawzi and Renner, connecting it with the ability to reconstruct the
state from its bipartite reductions. Namely, we show that the conditional
mutual information is an upper bound on the regularized relative entropy
distance between the quantum state and its reconstructed version. It is
also an upper bound for the measured relative entropy distance of the state
to its reconstructed version. The main ingredient of the proof is the
fact that the conditional mutual information is the optimal quantum
communication rate in the task of state redistribution.
\end{abstract}
\maketitle

\parskip .75ex

Quantum information theory is the successful framework describing the transmission and storage of information. It not only generalized all of the classical information-theoretic results but also developed a wealth of tools to analyze a number of scenarios beyond the reach of the latter, such as entanglement processing. One of the central quantities of the classical information theory that directly generalizes to quantum information is the conditional mutual information (CMI). For a tripartite state $\rho_{BCR}$ it is defined as

\begin{eqnarray}
&& I(C : R | B)_{\rho} \\  &:=& S(BC)_{\rho} + S(BR)_{\rho} - S(BCR)_{\rho} - S(B)_{\rho}, \nonumber
\end{eqnarray}
with $S(X)_{\rho} := - \tr(\rho_X \log \rho_X)$ as von Neumann entropy.  It measures the correlations of subsystems $C$ and $R$ relative to subsystem $B$. The fact the classical CMI is non-negative is a simple consequence of the properties of the probability distributions; the same fact for the quantum CMI is equivalent to a deep result of quantum information theory -- strong subadditivity of the von Neumann entropy \cite{LR73}. Naturally, this led to a variety of applications in different areas, ranging from quantum information theory \cite{DY08, CW04, BCY11} to condensed matter physics \cite{PH11, Kim13, Kit13}.

In the classical case, for every tripartite probability distribution $p_{XYZ}$,
\begin{equation} \label{classicalrelativeentropy}
I(X : Z | Y) = \min_{q \in \text{MC}} S(p || q),
\end{equation}
where $S(p || q) := \sum_i p_i \log(p_i/q_i)$ is the relative entropy and the minimum is taken over the set MC of all distributions $q$ such that $X - Y - Z$ form a Markov chain. Equivalently, the minimization in the right-hand side of Eq. (\ref{classicalrelativeentropy}) could be taken over $\Lambda \otimes \id_Z (p_{YZ})$, for reconstruction channels  $\Lambda : Y \rightarrow YX$. In particular, $I(X : Z | Y) = 0$ if, and only if, $X - Y - Z$ form a Markov chain (which is equivalent to the existence of a channel $\Lambda : Y \rightarrow YX$ such that $p_{XYZ} = \Lambda \otimes \id_Z (p_{YZ})$).

The class of tripartite quantum states $\rho_{BCR}$ satisfying $I(C : R | B)_{\rho} = 0$ has also been similarly characterized~\cite{HJPW04}: The $B$ subsystem can be decomposed as $B = \bigoplus_k B_{L, k} \otimes B_{R, k}$ (with orthogonal vector spaces $B_{L, k}\otimes B_{R, k}$) and the state written as
\begin{equation} \label{QMS}
\rho_{BCR} = \bigoplus_{k} p_k \rho_{CB_{L, k}} \otimes \rho_{B_{R, k} R}
\end{equation}
for a probability distribution $\{  p_k\}$ and states $\rho_{CB_{L, k}} \in C \otimes B_{L, k}$ and $ \rho_{B_{R, k} R} \in B_{R, k} \otimes R$. States of this form are called quantum Markov because in analogy to Markov chains, conditioned on the outcome of the measurement onto $\{ B_{L, k} \otimes B_{R, k} \}$, the resulting state on $C$ and $R$ is product. 

Paralleling the classical case, $\rho_{BCR}$ is a quantum Markov state
if, and only if, there exists a reconstruction channel $\Lambda : B
\rightarrow BC$ such that $\Lambda \otimes \id_R(\rho_{BR}) =
\rho_{BCR}$ \footnote{Ref.~\cite{HJPW04} shows that Markov states have
  perfect reconstruction maps; see Eq.~\eq{transpose} for an explicit formula.  The converse is straightforward.
  If $\Lambda \otimes \id_R(\rho_{BR}) =\rho_{BCR}$ then
$\Lambda(\rho_B)=\rho_{BC}$.  Using the recovery condition,
and the monotonicity of relative
entropy under applying first $(\Lambda\ot \id_R)$ and then $\tr_C$, we obtain
$S(\rho_{BCR} \| \rho_R \ot \rho_{BC}) = 
S((\Lambda \ot \id_R)\rho_{BR} \| (\Lambda \ot \id_R)(\rho_B \ot \rho_R))
 \leq S(\rho_{BR} \| \rho_B \ot \rho_R)
  \leq S(\rho_{BCR} \| \rho_{BC} \ot \rho_{R}).$
It then follows that
$I(C:R|B)= S(\rho_{BCR} \| \rho_{BC} \ot \rho_{R})- S(\rho_{BR} \|
\rho_B \ot \rho_R) = 0$.}.
Having generalized the definition of CMI, can we also retain the above
equivalence, with the set of quantum Markov 
states taking the role of Markov chains?  Surprisingly, it turns out that
this is not the case~\cite{ILW08} and it seems not to be possible to
connect states that are close to Markov states with states of small
conditional mutual information in a meaningful way (see, however,
\cite{BCY11, LW14b}).

Nonetheless, it might be possible to relate states with small conditional mutual information with those that can be approximately reconstructed from their bipartite reductions, i.e. such that $\Lambda \otimes \id_R(\rho_{BR}) \approx \rho_{BCR}$. Indeed, several conjectures appeared recently in this respect~\cite{Kim13, Zha13, OS13, BSW14}. 

A recent breakthrough result from Fawzi and Renner gives the first such connection. They proved the following inequality~\cite{FR14}: 

\begin{equation}  \label{FawziRennerInequality}
I(C : R | B)_{\rho} \geq \min_{\Lambda : B \rightarrow BC} S_{1/2}(\rho_{BCR} || \Lambda \otimes \id_R(\rho_{BR}))
\end{equation}
with $S_{1/2}(\rho || \sigma) := -2 \log F(\rho, \sigma)$ the
order-$1/2$ R\'{e}nyi relative entropy, where $F(\rho, \sigma) = \tr[(\sigma^{\frac{1}{2}} \rho \sigma^{\frac{1}{2}})^{\frac{1}{2}}]$ is the fidelity~\cite{Josza94}. It implies that if the conditional mutual information of $\rho_{BCR}$ is {small}, there exists a reconstructing channel $\Lambda : B \rightarrow BC$ such that $\Lambda \otimes \id_R(\rho_{BR})$ has {high} fidelity with $\rho_{BCR}$.

In this Letter, we prove a strengthened version of the Fawzi-Renner inequality. We also give a simpler proof of the inequality based on the task of state redistribution,~\cite{DY08} which gives an operational interpretation to the conditional mutual information.   

\vspace{0.5 cm}

\noindent \textbf{Result.} Let $S(\rho || \sigma) := \tr[\rho(\log \rho - \log \sigma)]$ be the quantum relative entropy of $\rho$ and $\sigma$. Define the measured relative entropy as 
\be
\mathbb{M}S(\rho || \sigma) = \max_{M \in {\cal M}} S(M(\rho) || M(\sigma)),
\ee
where ${\cal M}$ is the set of all quantum-classical channels $M(\rho) = \sum_k \tr(M_k \rho) \ket{k}\bra{k}$, with $\{ M_k \}$ a POVM and $\{ \ket{k} \}$ an orthonormal basis.

The main result of this Letter is the following:

\begin{thm} \label{main}
For every state $\rho_{BCR}$,
\begsub{mainthm}
& I(C:R|B)_{\rho} \\ 
\geq& \lim_{n \rightarrow \infty} \min_{\Lambda_n : B^n \rightarrow
  B^nC^n} 
\frac{1}{n} S(\rho_{BCR}^{\otimes n}|| \Lambda_n\otimes
\id_{R^n}(\rho_{BR}^{\otimes n})) 
\label{eq:main-asym} \\ 
 \geq&  \min_{\Lambda : B \rightarrow BC} 
\mathbb{M}S( \rho_{BCR} || \Lambda \otimes \id_R(\rho_{BR}) )
\label{eq:main-measured}\\
\geq &   \min_{\Lambda : B \rightarrow BC} 
S_{1/2}(\rho_{BCR} \| \Lambda \ot \id_R(\rho_{BR})).
\label{eq:main-FR}
\endsub
\end{thm}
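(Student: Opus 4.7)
I would prove the three chained inequalities separately, in the order (\ref{eq:main-FR}), then (\ref{eq:main-asym}), then (\ref{eq:main-measured}).

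Inequality (\ref{eq:main-FR}) follows pointwise in $\Lambda$. Classical monotonicity of R\'{e}nyi divergences in $\alpha$, applied to measurement outcome distributions, gives $\mathbb{M}S \geq \mathbb{M}S_{1/2}$. Moreover, the variational formula $F(\rho,\sigma) = \min_M \sum_k \sqrt{\tr(M_k\rho)\tr(M_k\sigma)}$, saturated by the Fuchs--Caves POVM, yields $\mathbb{M}S_{1/2}(\rho\|\sigma) = -2\log F(\rho,\sigma) = S_{1/2}(\rho\|\sigma)$. Taking $\min_\Lambda$ on both sides then derives (\ref{eq:main-FR}) from (\ref{eq:main-measured}).

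Inequality (\ref{eq:main-asym}) is the operational heart of the proof and is where state redistribution enters. I would use the direct (achievability) part of the state redistribution protocol \cite{DY08}: for every $\epsilon > 0$ and all sufficiently large $n$, there is a protocol using $q_n \leq \tfrac{n}{2}(I(C:R|B) + \epsilon)$ qubits of quantum communication and free shared entanglement that converts (a suitable purification of) $\rho_{BR}^{\otimes n}$ into an approximate copy of $\rho_{BCR}^{\otimes n}$. Absorbing the sender, the shared entanglement, and the communicated register into local ancillas packages the protocol as a channel $\Lambda_n : B^n \to B^n C^n$ whose output $\sigma_n := \Lambda_n \otimes \id_{R^n}(\rho_{BR}^{\otimes n})$ is trace-distance close to $\rho_{BCR}^{\otimes n}$. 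The principal obstacle is translating the operational cost $q_n$ into a bound on $\tfrac{1}{n} S(\rho_{BCR}^{\otimes n} \| \sigma_n)$. My plan is to do this via max-relative entropy: replacing the $q_n$-qubit communicated register by the maximally mixed state produces an auxiliary $\tilde{\sigma}_n$ with $D_{\max}(\sigma_n \| \tilde{\sigma}_n) \leq q_n$ by monotonicity of $D_{\max}$ under channels, and the decoupling structure of the state-redistribution protocol then lets one compare $\tilde{\sigma}_n$ to $\rho_{BCR}^{\otimes n}$, yielding $\tfrac{1}{n} S(\rho_{BCR}^{\otimes n} \| \sigma_n) \leq I(C:R|B) + O(\epsilon)$ by an asymptotic equipartition argument.

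Inequality (\ref{eq:main-measured}) requires passing from the regularized iid quantity to a single-shot measured quantity. I would use a quantum de Finetti / generalized quantum Stein's lemma argument in the spirit of Brand{\~a}o--Plenio. Since $\rho_{BCR}^{\otimes n}$ and $\rho_{BR}^{\otimes n}$ are permutation invariant, averaging $\Lambda_n$ over permutations is free, so we may take $\sigma_n$ symmetric. A de Finetti reduction then approximates the permutation-invariant $\sigma_n$ by an integral $\int d\mu(\Lambda) (\Lambda \otimes \id_R)(\rho_{BR})^{\otimes n}$ up to polynomial-in-$n$ error, and the generalized Stein's lemma converts the rate $\tfrac{1}{n} \min_{\Lambda_n} S(\rho_{BCR}^{\otimes n} \| \sigma_n)$ into $\min_\Lambda \mathbb{M}S(\rho_{BCR} \| (\Lambda \otimes \id_R)(\rho_{BR}))$. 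The measured entropy appears on the single-shot side precisely because it is the quantity that is additive on iid states and well-behaved under convex combinations and permutation symmetrization. The main hypothesis to verify is closure of the reconstruction set $\{(\Lambda \otimes \id_R)(\rho_{BR}) : \Lambda\}$ under tensor products, permutations, and convex combinations, which is straightforward.
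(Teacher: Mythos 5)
Your overall architecture tracks the paper's quite closely: Eq.~\eq{main-FR} from Eq.~\eq{main-measured} via $S \geq S_{1/2}$ and the Fuchs--Caves measurement; Eq.~\eq{main-asym} by running the achievability part of state redistribution with the communicated register replaced by a maximally mixed state; and Eq.~\eq{main-measured} by symmetrizing the channel and applying a de Finetti/postselection reduction at the level of channels (this is exactly Lemma~\ref{lem:ps-channel}). Two points in your argument for Eq.~\eq{main-asym} need repair but are minor. First, the operator inequality $\pi_{MN} \leq \dim(M)\, I_M \ot \pi_N = \dim(M)^2\, \tau_M \ot \pi_N$ gives $D_{\max}(\sigma_n \| \tilde\sigma_n) \leq 2q_n$, not $q_n$ (already for a maximally entangled state on $MN$ the factor $2$ is necessary); this factor of two is precisely why the final rate is $I(C:R|B)$ rather than half of it, consistent with the conclusion you state. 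Second, passing from trace-distance closeness of $\sigma_n$ to $\rho_{BCR}^{\ot n}$ to $S(\rho_{BCR}^{\ot n}\|\sigma_n) = o(n)$ is not automatic, since relative entropy is not trace-norm continuous; the paper handles this by mixing the decoder with a depolarizing channel to lower-bound $\lambda_{\min}$ of the output and then invoking the Audenaert--Eisert bound (Lemma~\ref{AudenaertE}). Your ``decoupling structure plus AEP'' phrase does not address this.

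The genuine gap is the final step of Eq.~\eq{main-measured}. After the de Finetti reduction you are left with $\frac{1}{n} S\bigl(\rho_{BCR}^{\ot n} \,\big\Vert\, \int ((\cE\ot\id_R)(\rho_{BR}))^{\ot n}\,\mu(d\cE)\bigr)$ and you invoke ``the generalized quantum Stein's lemma'' to convert this into $\min_\Lambda \mathbb{M}S(\rho_{BCR}\|\Lambda\ot\id_R(\rho_{BR}))$. This does not work as stated. A Stein/Sanov-type argument universal over the alternative set would yield the \emph{unmeasured} single-shot relative entropy on the right-hand side, which is exactly the strengthening the paper leaves open; the obstruction is that no quantum hypothesis test universal for a composite alternative hypothesis exists for general convex sets~\cite{qSanov}. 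Conversely, the quantity $\min_\cE \mathbb{M}S$ involves a measurement optimized separately for each $\cE$, so one cannot simply ``measure first and apply classical Sanov'' either --- the min over $\cE$ and the max over measurements do not commute for free. The missing ingredient is Piani's superadditivity lemma (Lemma~\ref{Piani}): taking $K$ to be the convex hull of tensor powers of reconstructed states, it gives $\min_{\sigma\in K} S(\rho_{XY}\|\sigma_{XY}) \geq \min_{\sigma\in K}\mathbb{M}S(\rho_X\|\sigma_X) + \min_{\sigma\in K} S(\rho_Y\|\sigma_Y)$, which peels off one measured copy at a time and, iterated $n$ times, yields Eq.~\eq{main-measured}. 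Without this lemma (or an equivalent substitute) your third step does not close.
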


Eq.~\eq{main-FR} is the Fawzi-Renner inequality
(Eq. (\ref{FawziRennerInequality})) and follows from
Eq.~\eq{main-measured} using the bound $S(\pi || \sigma)
\geq S_{1/2}(\pi || \sigma)$~\cite{FL13} and the fact that $\min_{M
  \in {\cal M}} F(M(\pi), M(\sigma)) = F(\pi, \sigma)$~\cite{FC94}.  Eq.~\eq{main-measured} also generalizes one side of
Eq. (\ref{classicalrelativeentropy}) to quantum states, implying that
it is optimal at least for classical states $\rho$.

Our lower bound provides a substantial improvement over the original
Fawzi-Renner bound even for classical states. To see this, consider
the classically correlated state $\rho_{CBR} = \rho_{CR}\otimes
{\frac{\mathbb{I}_B}{d_B}}$ with $d:=d_C=d_R$ and $\rho_{CR} =
(1-\eps)|00\rangle\langle 00|_{CR} + \frac{\eps}{d-1}
\sum_{k=1}^{d-1} |kk\rangle\langle
kk|_{CR}$. 
Then Eq.~\eqref{eq:main-measured} becomes $\mathbb{M}S( \rho_{BCR} || \sigma_{BC}\otimes \rho_{R})$, where $\sigma_{BC}$ depends on the channel $\Lambda$ that minimizes Eq.~\eqref{eq:main-measured}. The measured relative entropy is equal to the ordinary classical relative entropy between the distribution $p_Bp_{CR}$ (generated from $\rho_{BCR}$) and the product distribution $q_{BC}p_R$ (generated from $\sigma_{BC}\otimes\rho_R$) optimized over all quantum-classical channels. Observing that $p_{CR}$ is maximally correlated whereas $q_Cp_B$ is the product distribution irrespective of $\Lambda$, Eq.~\eqref{eq:main-measured} equals to $I(C:R)\approx\eps\log{\left(d-1\right)}$. The corresponding Fawzi-Renner
bound~\eqref{eq:main-FR} becomes $-\log
F(\rho_{CR},\rho_C\otimes\rho_R)\le -\log(1-\epsilon)\approx
\epsilon$. Thus, the lower bound~\eqref{eq:main-measured} is optimal
for classical states. 

Another application of our result is the well-known problem of classification of the short-range entangled states studied by Kitaev~\cite{kitaev_topological}. Defining such a class of states is nontrivial and one of the natural properties to be required is the ability generate them locally: There must exist a $O(1)$ quantum circuit that generates the designated state from a product state. In particular, one sees that states with low conditional mutual information can be generated from the product states according to the Fawzi-Renner bound. Our result improves the lower bound when we quantify the distance between the states using measured relative entropy. 

Li and Winter conjectured in \cite{LW14b} that Eq.~(\ref{eq:main-measured}) could be strengthened to have the relative entropy  in the right-hand side (instead of the measured relative entropy). We leave this as an open question, but we note that Eq. (\ref{eq:main-asym}) shows that an asymptotic version of the conjectured inequality does hold true.  

\hspace{0.5 cm}

\noindent \textbf{Proof of Theorem \ref{main}}: The main tool in the proof will be the state redistribution protocol of Devetak and Yard~\cite{DY08, YD09, Opp08} which gives an operational meaning for the conditional mutual information as twice the optimal quantum communication cost of the protocol. Consider the state $\ket{\psi}_{ABCR}^{\otimes n}$ shared by two parties (Alice and Bob) and the environment (or reference system). Alice has $A^nC^n$ (where we denote $n$ copies of $A$ by $A^n$ and likewise for $C$,$B$ and $R$), Bob has $B^n$, and $R^n$ is the reference system. In state redistribution, Alice wants to redistribute the $C^n$ subsystem to Bob using preshared entanglement and quantum communication. 

It was shown in~\cite{DY08, YD09} that using preshared entanglement Alice can send the $C^n$ part of her state to Bob, transmitting approximately $(n/2)I(C:R|B)$ qubits in the limit of a large number of copies $n$. More precisely:

\begin{lemma} [State Redistribution Protocol \cite{DY08, YD09}] \label{lemredistribution}
For every $\ket{\psi}_{ABCR}$ there exist completely positive trace-preserving encoding maps $\cE_n : A^nC^n X_n \rightarrow A^n  G_n$ and decoding maps $\cD_n : B^nG_n Y_n  \rightarrow B^n C^n$ such that 
\begin{equation} \label{correctenessstateredistribution}
\lim_{n \rightarrow \infty} \Vert  \cD_n \circ \cE_n ( \ket{\psi}\bra{\psi}_{ABCR}^{\otimes n} \otimes \Phi_{X_nY_n}) - \ket{\psi}\bra{\psi}_{ABCR}^{\otimes n}  \Vert_1 = 0
\end{equation}
and
\begin{equation}
\lim_{n \rightarrow \infty} \frac{\log \dim(G_n)}{n} = \frac{1}{2}I(C : R | B)_{\rho},
\end{equation}
where $\rho_{BCR} := \tr_{A}(\ket{\psi}\bra{\psi}_{ABCR})$ and $\Phi_{X_nY_n}$ is a maximally entangled state shared by Alice (who has $X_n$) and Bob (who has $Y_n$), and $\Vert.\Vert_1$ denotes the trace norm.
\end{lemma}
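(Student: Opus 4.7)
The plan is to follow the \emph{decoupling + Uhlmann} paradigm developed by Abeyesinghe--Devetak--Hayden--Winter for the Fully Quantum Slepian--Wolf (FQSW) protocol and extended by Devetak--Yard to state redistribution. Working in the iid asymptotic regime with resource $\ket{\psi}_{ABCR}^{\otimes n}\otimes\Phi_{X_nY_n}$, Alice holds $A^nC^nX_n$, Bob holds $B^nY_n$, and the reference is $R^n$. The encoder $\cE_n$ is built from a Haar-random (or approximate unitary 2-design) isometry acting on Alice's registers, applied after projection onto a typical subspace; one partitions the output as $A^n\otimes G_n$ so that Alice keeps $A^n$ and sends $G_n$, with $\log\dim(G_n)=(n/2)I(C:R|B)_{\rho}+o(n)$ qubits, to Bob.

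The technical core is a decoupling argument. Passing to the AEP limit of the smoothed one-shot decoupling theorem (Dupuis--Szehr--Tomamichel), I would show that, in expectation over the random isometry and provided $\log\dim(G_n)$ exceeds a particular von Neumann threshold, the post-encoding reduced state on Alice's residual register together with the reference is close in trace norm to the corresponding marginal of the desired target state $\ket{\psi}_{ABCR}^{\otimes n}\otimes\Phi'_{X_n'Y_n'}$, where $\Phi'$ is a residual maximally entangled pair that absorbs the rate mismatch. Since the global state after encoding is pure, Uhlmann's theorem then produces an isometry on Bob's registers $B^nY_nG_n$ taking it within trace distance $\varepsilon$ of the target; composing that isometry with a partial trace of the garbage registers defines the decoder $\cD_n:B^nG_nY_n\to B^nC^n$. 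A standard concentration argument over the random isometry finally selects one deterministic encoder attaining the vanishing error in~\eqref{correctenessstateredistribution}.

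The hard part is matching the decoupling threshold exactly to $(n/2)I(C:R|B)_{\rho}$ rather than the larger rate one would get by naively decoupling all of Alice's register against the reference while ignoring Bob's side information $B^n$. Bob's $B^n$ does not enter the encoding at all; it enters the Uhlmann step as purifying degrees of freedom that absorb an $I(C:B)$-sized contribution into the residual entanglement $\Phi'$ rather than into the communication register. Concretely, the Devetak--Yard rate can be reached either (a) by a conditional-decoupling step in which Bob's $B^n$ plays the role of side information, so the relevant quantity is the conditional smoothed min-entropy $H_{\min}^{\varepsilon}(A^nC^n\mid B^nR^n)$ rather than $H_{\min}^{\varepsilon}(A^nC^n\mid R^n)$, with $X_n$ supplying the entanglement bank required for the purifying ancillas; or (b) by decomposing the protocol into an FQSW-type merge followed by an entanglement-assisted compression stage, whose quantum rates sum, after invoking purity of $\ket{\psi}_{ABCR}$ and the identity $I(C:R|B)=S(C|B)-S(C|BR)$, to exactly $(1/2)I(C:R|B)$. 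Either route requires the nontrivial rate accounting to identify the threshold with the conditional mutual information; once that is done, the two displayed limits of the lemma follow.
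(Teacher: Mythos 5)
The paper does not prove this lemma at all: it is imported as a black box from Devetak--Yard \cite{DY08,YD09} (with \cite{Opp08} cited elsewhere as an alternative derivation via two coherent mergings), so there is no in-paper argument to compare yours against. Judged on its own terms, your decoupling-plus-Uhlmann plan is the standard modern route to state redistribution and is sound in outline: the encoder is a random isometry on Alice's registers, Uhlmann's theorem on Bob's side converts a correct $A^nR^n$ marginal into a decoder achieving Eq.~(\ref{correctenessstateredistribution}), and your route (b) is essentially the coherent-relay decomposition of \cite{Opp08}, with $I(C:R|B)=S(C|B)-S(C|BR)$ being the correct pivot identity. Note also that the lemma places no bound on $\dim X_n$, so the entanglement bookkeeping you devote attention to (the residual pair $\Phi'$) is not actually constrained here; only the quantum rate $\frac{1}{n}\log\dim G_n$ must be controlled, which simplifies your task.

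That said, what you have written is a proof plan rather than a proof, and the one step that makes the theorem nontrivial is precisely the step you defer. Showing that the decoupling threshold sits at $\tfrac{n}{2}I(C:R|B)_\rho$ rather than at the naive FQSW value $\tfrac{n}{2}I(C:AR)_\rho$ --- obtained by decoupling Alice's retained registers from $A^nR^n$ while ignoring Bob's side information, and strictly larger since $I(C:AR)=S(C)+S(C|B)\ge S(C|A)+S(C|B)=I(C:R|B)$ for pure $\psi_{ABCR}$ --- is the entire content of the achievability result. In route (a), the quantity you name, $H_{\min}^{\eps}(A^nC^n\mid B^nR^n)$, is not what comes out of the calculation: the known one-shot characterizations involve a \emph{difference} of a max- and a min-entropy of $C$ alone, conditioned on $B$ and on $RB$ respectively, which only in the iid limit collapses to $\tfrac12[S(C|B)-S(C|BR)]$; getting there requires a careful choice of which registers are randomized versus which serve as the conditioning system in the decoupling theorem. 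In route (b), the two merging stages must be composed coherently, not sequentially, or their quantum rates do not telescope to $\tfrac12 I(C:R|B)$. Neither objection means your approach fails --- both routes are known to succeed --- but as submitted the proposal identifies the difficulty without resolving it, so it cannot yet substitute for the citation to \cite{DY08,YD09}.
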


We split the proof of Theorem~\ref{main} into the proof of Proposition \ref{prop:asym} and Eq.~\eqref{eq:measured} below. 

Proposition~\ref{prop:asym} follows from the state redistribution protocol outlined above. The main idea is the following: Suppose that in the state redistribution protocol Bob does not receive any quantum communication from Alice, but instead he "mocks" the communication (locally preparing the maximally mixed state in $G_n$) and applies the decoding map $\cD_n$. It will follow that even though the output state might be very far from the target one, the relative entropy per copy of the output state and the original one cannot be larger than twice the amount of communication of the protocol (which is given by the conditional mutual information). 

\begin{prop}   \label{prop:asym}
For every state $\rho_{BCR}$, 
\begin{eqnarray}
&&I(C : R | B)_{\rho} 
\label{eq:prop-asym}
\\ &\geq& \lim_{n \rightarrow \infty} \min_{\Lambda : B^n \rightarrow B^nC^n} \frac{1}{n} S(\rho_{BCR}^{\otimes n}|| \Lambda \otimes \id_{R^n}(\rho_{BR}^{\otimes n})).  \nonumber 
\end{eqnarray}
\end{prop}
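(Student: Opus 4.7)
The plan is to instantiate the state-redistribution encoder $\cE_n$ and decoder $\cD_n$ from \lemref{redistribution} and to use a ``mocked decoder'' as the reconstruction channel. Let $\ket{\psi}_{ABCR}$ purify $\rho_{BCR}$, let $\tau_X := I_X/\dim(X)$ denote the maximally mixed state on register $X$, and set
\[
\Lambda_n(\sigma_{B^n}) := \cD_n\bigl(\sigma_{B^n} \ot \tau_{G_n} \ot \tau_{Y_n}\bigr),
\]
so that Bob imagines receiving uniformly random ``messages'' on $G_n$ and $Y_n$ rather than what Alice actually encodes. Writing $\omega_{B^n G_n Y_n R^n} := \tr_{A^n}[(\cE_n\ot\id)(\proj{\psi}^{\otimes n} \ot \Phi_{X_nY_n})]$ for the post-encoding state (with the reference $R^n$ kept intact), the fact that $\cE_n$ acts only on Alice's registers yields the key identity $\tr_{G_n}\omega = \rho_{BR}^{\otimes n} \ot \tau_{Y_n}$.

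The second step bundles the communication rate and the ``cost of the mock'' into one short computation. Using $\log\tau_{G_n}=-(\log d_{G_n})I$ and the Araki--Lieb-type bound $|S(\omega)-S(\tr_{G_n}\omega)|\leq\log d_{G_n}$, one finds
\[
S\bigl(\omega \,\|\, (\tr_{G_n}\omega)\ot\tau_{G_n}\bigr) = S(\tr_{G_n}\omega)-S(\omega)+\log d_{G_n} \leq 2\log d_{G_n}.
\]
Applying monotonicity of relative entropy under $\cD_n\ot\id_{R^n}$, and noting that this channel sends $(\tr_{G_n}\omega)\ot\tau_{G_n}=\rho_{BR}^{\otimes n}\ot\tau_{G_nY_n}$ exactly to $\Lambda_n\ot\id_{R^n}(\rho_{BR}^{\otimes n})$, gives
\[
S\bigl(\tilde\sigma_n \,\|\, \Lambda_n\ot\id_{R^n}(\rho_{BR}^{\otimes n})\bigr) \leq 2\log d_{G_n} = n\,I(C:R|B)_\rho + o(n),
\]
where $\tilde\sigma_n := (\cD_n\ot\id_{R^n})(\omega)$ is $\epsilon_n$-close in trace distance to $\rho_{BCR}^{\otimes n}$ with $\epsilon_n\to 0$ by the correctness clause of \lemref{redistribution}.

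The main obstacle is the final step of replacing $\tilde\sigma_n$ by $\rho_{BCR}^{\otimes n}$ on the left, since relative entropy lacks uniform continuity in its first argument. My workaround has two ingredients: (i) invoke state redistribution at the slightly suboptimal rate $\tfrac12(I(C:R|B)_\rho+\delta)$, for which standard typicality estimates make $\epsilon_n$ decay exponentially in $n$; and (ii) smooth $\Lambda_n$ by mixing in a vanishing weight of a full-rank noise channel so that the perturbed $\hat\pi_n$ satisfies $\log\|\hat\pi_n^{-1}\|_\infty = O(n)$ while shifting the relative entropy by only $o(1)$. Writing
\[
S(\rho_{BCR}^{\otimes n}\|\hat\pi_n)-S(\tilde\sigma_n\|\hat\pi_n) = \bigl(S(\tilde\sigma_n)-S(\rho_{BCR}^{\otimes n})\bigr)+\tr\bigl[(\tilde\sigma_n-\rho_{BCR}^{\otimes n})\log\hat\pi_n\bigr],
\]
the first term is $O(\epsilon_n n)$ by Alicki--Fannes and the second is at most $\epsilon_n\cdot O(n)$ by H\"older, both $o(n)$ once $\epsilon_n$ is exponentially small. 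Dividing by $n$, letting $n\to\infty$ and then $\delta\to 0$, gives the inequality. That \eq{prop-asym} is an honest limit (not merely a $\liminf$) follows from Fekete's lemma applied to the subadditive sequence $n\mapsto \min_{\Lambda_n}S(\rho_{BCR}^{\otimes n}\|\Lambda_n\ot\id_{R^n}(\rho_{BR}^{\otimes n}))$, subadditivity being witnessed by the product channels $\Lambda_{n_1}\ot\Lambda_{n_2}$.
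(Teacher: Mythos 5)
Your proposal is correct and follows the same overall architecture as the paper's proof: mock the quantum communication of the Devetak--Yard protocol by feeding the decoder a maximally mixed state on $G_n$, so that the reconstruction map is $\Lambda_n(\cdot)=\cD_n(\cdot\ot\tau_{G_n}\ot\tau_{Y_n})$, and show that the resulting relative entropy is at most $2\log\dim(G_n)=nI(C:R|B)_\rho+o(n)$. The two technical steps are executed differently, though. Where the paper bounds the ``cost of the mock'' via the operator inequality $\phi_{G_nY_nB^nR^n}\leq\dim(G_n)^2\,\tau_{G_n}\ot\tau_{Y_n}\ot\rho_{BR}^{\ot n}$ combined with operator monotonicity of the logarithm (Lemma~\ref{opmon}), you use the exact identity $S(\omega\,\|\,(\tr_{G_n}\omega)\ot\tau_{G_n})=S(\tr_{G_n}\omega)-S(\omega)+\log\dim(G_n)\leq 2\log\dim(G_n)$ followed by data processing under $\cD_n\ot\id_{R^n}$; both yield the same constant. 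And where the paper places the protocol error in the \emph{second} argument---showing $\tfrac1n S(\rho_{BCR}^{\ot n}\,\|\,(\tilde\cD_n\ot\id_{R^n})(\phi))\to 0$ via the Audenaert--Eisert continuity bound (Lemma~\ref{AudenaertE}) after mixing the decoder with a depolarizing channel---you place it in the \emph{first} argument, swapping $\tilde\sigma_n$ for $\rho_{BCR}^{\ot n}$ by an explicit Fannes-plus-H\"older estimate against a full-rank-smoothed second argument $\hat\pi_n$. Both routes rely on the same $2^{-n}$ depolarizing smoothing to control $\log\lambda_{\min}$. Two minor remarks: your detour through an exponentially decaying $\epsilon_n$ (running at rate $I+\delta$) is unnecessary, since both of your error terms are $O(\epsilon_n n)=o(n)$ as soon as $\epsilon_n\to0$, which is all that Lemma~\ref{lemredistribution} guarantees; and your closing appeal to Fekete's lemma (subadditivity witnessed by product channels, using additivity of relative entropy on tensor products) is a welcome addition, since it justifies the existence of the limit in the statement, a point the paper leaves implicit.
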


\begin{proof}
Let $\ket{\psi}_{ABCR}$ be a purification of $\rho_{BCR}$. Consider the state redistribution protocol for sending $C$ from Alice (who has $AC$) to Bob (who has $B$). Let $\phi_{G_n Y_n A^n B^n R^n} := \cE_n \otimes \id_{B^nR^nY_n} (\ket{\psi}\bra{\psi}_{ABCR}^{\otimes n} \otimes \Phi_{X_n Y_n})$ be the state after the encoding operation. 

Using the operator inequality $\pi_{MN} \leq \dim(M) I_M \otimes \pi_N$, valid for every state $\pi_{MN}$, we find
\begin{equation} \label{opineqphi}
\phi_{G_n Y_n B^n R^n} \leq \dim(G_n)^2 \tau_{G_n} \otimes \tau_{Y_n} \otimes \rho_{B R}^{\otimes n} . 
\end{equation}
with $\tau_{Y_n}, \tau_{G_n}$ as the maximally mixed state on $Y_n$ and $G_n$, respectively. We used $\phi_{Y_n B^n R^n} = \tau_{Y_n} \otimes \rho_{B R}^{\otimes n}$, which holds true since $\cE_n$ only acts nontrivially on $A^n C^n X_n$

Let $\cD_n : G_n Y_n B^n \rightarrow B^n C^n$ be the decoding operation of Bob in state redistribution (see Lemma \ref{lemredistribution}) and define $\tilde{D}_n := (1 - 2^{-n})\cD_n + 2^{-n} \Lambda_{\text{dep}}$, with $\Lambda_{\text{dep}}$ the depolarizing channel mapping all states to the maximally mixed. Since $\tilde \cD_n$ is completely positive, using Eq. (\ref{opineqphi}) we get
\begin{eqnarray}
&&(\tilde \cD_n \otimes \id_{R^n}) (\tau_{G_n} \otimes \tau_{Y_n} \otimes \rho_{BR}^{\otimes n}) \\ &\geq& \dim(G_n)^{-2} (\tilde \cD_n \otimes \id_{R^n}) (\phi_{G_n Y_n B^n R^n}).  \nonumber
\end{eqnarray}

From the operator monotonicity of the  $\log$ (see Lemma~\ref{opmon} in the Supplemental Material),
\begin{eqnarray}
&& S(\rho_{BCR}^{\otimes n} || (\tilde \cD_n \otimes \id_{R^n}) (\tau_{G_n} \otimes \tau_{Y_n} \otimes \rho_{BR}^{\otimes n})) \\ &\leq& S(\rho_{BCR}^{\otimes n} || (\tilde \cD_n \otimes \id_{R^n}) (\phi_{G_n Y_n B^n R^n})) + 2\log(\dim(G_n)).  \nonumber
\end{eqnarray}

Eq. (\ref{correctenessstateredistribution}) gives 
\begin{equation}
\lim_{n \rightarrow \infty} \Vert \rho_{BCR}^{\otimes n} - (\tilde \cD_n \otimes \id_{R^n}) (\phi_{G_n Y_n B^n R^n}) \Vert_1 = 0.
\end{equation}

Because $(\tilde \cD_n \otimes \id_{R^n}) (\phi_{G_n Y_n B^n R^n}) = (1 - 2^{-n})(\cD_n \otimes \id_{R^n})(\phi_{G_n Y_n B^n R^n}) + 2^{-n} \tau_{BC}^{\otimes n} \otimes \rho_R^{\otimes n}$ (with $\tau_{BC}$ the maximally mixed state on $BC$), Lemma~\ref{AudenaertE} in the Supplemental Material gives 
\begin{equation}
\lim_{n \rightarrow \infty} \frac{1}{n} S(\rho_{BCR}^{\otimes n} || (\tilde \cD_n \otimes \id_{R^n}) (\phi_{G_n Y_n B^n R^n})) = 0,
\end{equation}
and so
\begin{eqnarray}
&&I(C : R | B)_{\rho} \\ &=& 2\lim_{n \rightarrow \infty} \frac{\log(\dim(G_n))}{n} \nonumber \\
&\geq&  \lim_{n \rightarrow \infty} \min_{\Lambda_n : B^n \rightarrow B^n C^n} \frac{1}{n} S(\rho_{BCR}^{\otimes n} || (\Lambda_n \otimes \id_{R^n}) (\rho_{BR}^{\otimes n})). \nonumber 
\end{eqnarray}
\end{proof}

Even though we do not know whether 

\begin{eqnarray}
&& \lim_{n \rightarrow \infty} \min_{\Lambda : B^n \rightarrow B^nC^n} \frac{1}{n} S(\rho_{BCR}^{\otimes n}|| \Lambda \otimes \id_{R^n}(\rho_{BR}^{\otimes n})) \nonumber \\
&\stackrel{?}{\geq} & \min_{\Lambda : B \rightarrow BC} S(\rho_{BCR} || \Lambda \otimes \id_{R}(\rho_{BR})),
\end{eqnarray}

it turns out that a similar inequality holds true if we replace the relative entropy by its measured variant (see Sec. B in the Supplemental Material): For every state $\rho_{BCR}$ one has

\begin{eqnarray} \label{eq:measured}
&& \lim_{n \rightarrow \infty} \min_{\Lambda : B^n \rightarrow B^nC^n} \frac{1}{n} S(\rho_{BCR}^{\otimes n}|| \Lambda \otimes \id_{R^n}(\rho_{BR}^{\otimes n})) \nonumber \\ &\geq&  \min_{\Lambda : B \rightarrow BC}  \mathbb{M}S( \rho_{BCR} || \Lambda \otimes \id_R(\rho_{BR}) ).\label{eq:prop2statmt}
\end{eqnarray}

\noindent \textbf{Discussion and Open Problems.} The main result of this Letter, on one hand, and Theorem 4 of Ref. \cite{ILW08}, on the other hand, give
\begin{eqnarray} \label{lowerandupperCMI}
&& \min_{\sigma \in \text{QMS}} S(\rho_{BCR} || \sigma_{BCR}) \\ &\geq& I(C : R | B) \geq  \min_{\Lambda : B \rightarrow BR} \mathbb{M} S(\rho_{BCR} || \Lambda \otimes \id_R (\rho_{BR})), \nonumber  
\end{eqnarray}
with QMS the set of quantum Markov states given by Eq. (\ref{QMS}). For probability distributions the lower and upper bounds in Eq. (\ref{lowerandupperCMI}) coincide, giving Eq. (\ref{classicalrelativeentropy}). However, in the quantum case, the two can be very far from each other. 

An interesting question is whether we can also have equality in the
quantum case when minimizing over the set of reconstructed states. In
particular we can ask whether Eq.~\eq{prop-asym} holds with
equality. It turns out that this is false and can be disproved using
pure states of dimension $2\times 2\times 2$ and the transpose channel, defined for
a tripartite state $\rho_{BCR}$ as 
\begin{equation}
T(\pi) := \sqrt{\rho_{BC}} \left( {\rho_{B}^{-1/2}} \pi
  {\rho_{B}^{-1/2}} \otimes \id_C \right) \sqrt{\rho_{BC}}.
\label{eq:transpose}
\end{equation}

\begin{figure}
\includegraphics[width=0.5\textwidth]{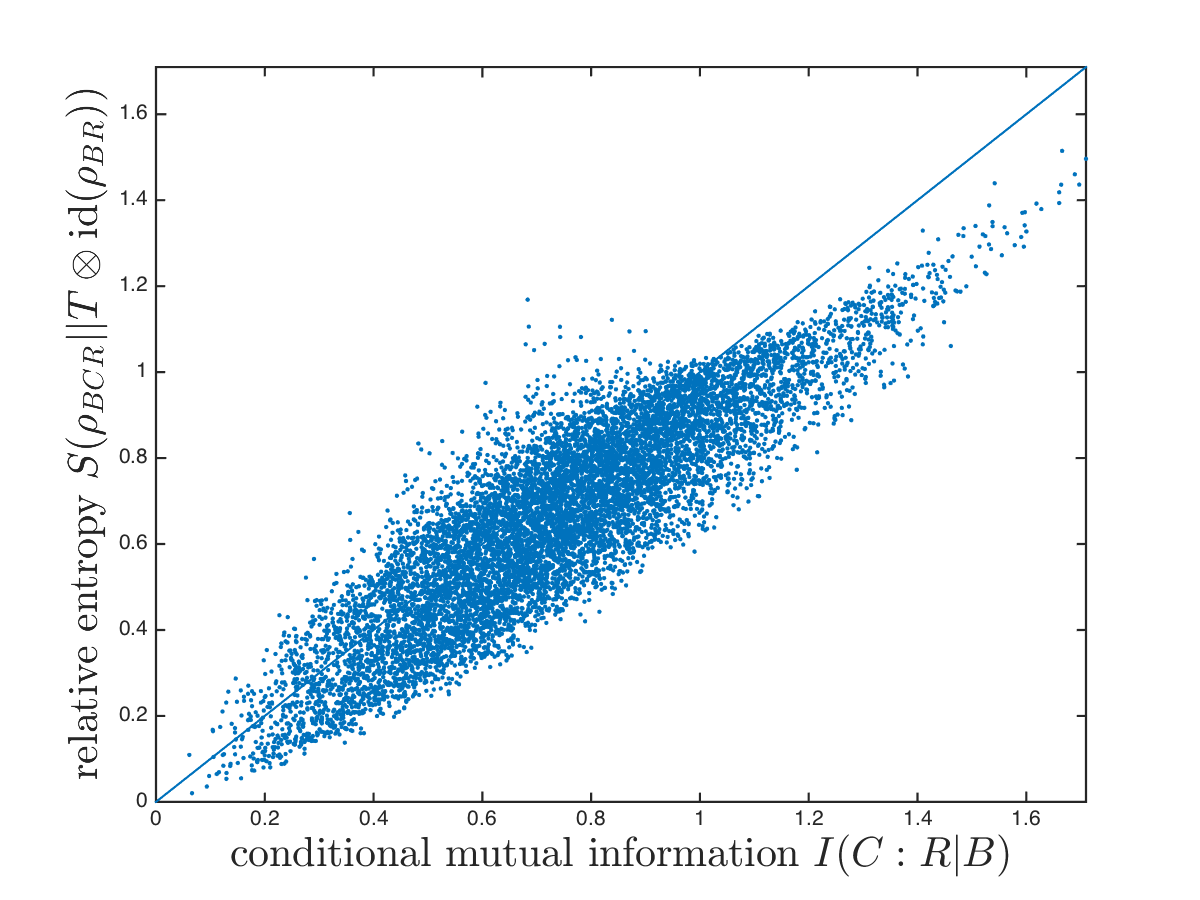}
\caption{Counterexamples for the case of equality in Eq.~\eq{prop-asym}: Conditional mutual information against the reconstructed relative entropy using the transpose channel. The sample consists of 10000 random pure states of dimension $2\times 2\times 2$.\label{fig:numerics}}
\end{figure}

In \fig{numerics} we plot the conditional mutual information
against the reconstructed relative entropy using the transpose channel
(i.e. $S(\rho_{BCR} \| T_B \ot \id_R(\rho_{BR}))$) for 10,000 randomly
chosen pure states of dimension $2\times 2\times 2$.  We see that for
roughly 73\% of the points, the relative entropy is strictly smaller
than the conditional mutual information when using the transpose
channel.  Since any particular reconstruction map also puts an upper
bound on the minimum relative entropy, Eq. \eq{prop-asym} must sometimes
be a strict inequality.  Similar numerical results were found in an
unpublished early version of~\cite{LW14b}.

In the proof of Theorem \ref{main} we were not able to give an explicit optimal reconstruction map. In the context of approximate recovery for pure states, the transpose channel is optimal up to a square factor~\cite{BK02} (using the fidelity as a figure of merit). We could ask whether the same holds for mixed states.  

Another interesting open problem is whether we can improve the lower bound in Eq. (\ref{lowerandupperCMI}) to have the relative entropy, instead of the measured relative entropy. Proposition \ref{prop:asym} and \lemref{ps-channel} in the Supplemental Material show that the result would follow from the following conjectured inequality: Given a state $\rho$, a convex closed set of states ${\cal S}$, and a measure $\mu$ with support only on ${\cal S}$,
\begin{equation}
\liminf_{n \rightarrow \infty} \frac{1}{n} S \left(\rho^{\otimes n} || \int \mu(d \sigma) \sigma^{\otimes n} \right) \stackrel{?}{\geq} \min_{\sigma \in {\cal S}} S(\rho || \sigma). 
\end{equation}

The case when $\rho_{BR} = \rho_B\ot\rho_R$ was recently proven in~\cite{HT14}. 
We can also easily prove the inequality classically, using hypothesis
testing, which is universal for the alternative hypothesis. However, since 
there is no quantum hypothesis test universal for the alternative hypothesis~\cite{qSanov} for general sets ${\cal S}$, we leave the inequality in the quantum case as an open problem for future work. 

\mbox{}

\noindent \textbf{Acknowledgements:} F.G.S.L.B. and J.O. thank EPSRC for
financial support. A.W.H. was funded by NSF Grants No. CCF-1111382 and CCF-1452616, ARO Contract No.
W911NF-12-1-0486 and Visiting Professorship VP2-2013-041 from the Leverhulme Trust.
 S.S. acknowledges the support of Sidney Sussex College. We thank Omar
 Fawzi, Ke Li, Brian Swingle, Marco Tomamichel, Mark
 Wilde, and Lin Zhang for helpful comments and the Newton Institute for their
 hospitality while some of this research was conducted. 

\section*{SUPPLEMENTAL MATERIAL}
\subsection{Auxiliary lemmas}
\begin{lemma} \label{opmon}
If $\pi\leq 2^{\lambda} \sigma$, then $S(\rho || \pi) \geq S(\rho || \sigma) - \lambda$. 
\end{lemma}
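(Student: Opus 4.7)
The inequality should follow from a single application of operator monotonicity of the logarithm. Recall that on positive operators, $\log$ is operator monotone: if $0 \leq A \leq B$ (on the common support), then $\log A \leq \log B$. Taking logarithms is compatible with scalar multiplication in the sense that $\log(2^\lambda \sigma) = \lambda I + \log \sigma$ on $\supp \sigma$, since $2^\lambda$ is a positive scalar and so multiplies every eigenvalue of $\sigma$ by $2^\lambda$ without altering eigenvectors.

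The plan is as follows. First, if $\supp \rho \not\subseteq \supp \pi$, then $S(\rho \| \pi) = +\infty$ and the claim is trivial, so we may assume $\supp \rho \subseteq \supp \pi$. The hypothesis $\pi \leq 2^\lambda \sigma$ already forces $\supp \pi \subseteq \supp \sigma$, hence $\supp \rho \subseteq \supp \sigma$ as well, so both relative entropies are finite and all computations below may be restricted to $\supp \sigma$.

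Next, I would apply operator monotonicity of the logarithm to $\pi \leq 2^\lambda \sigma$, obtaining
\begin{equation}
\log \pi \;\leq\; \log(2^\lambda \sigma) \;=\; \lambda I + \log \sigma.
\end{equation}
Multiplying by $-\rho$ and taking the trace reverses the inequality:
\begin{equation}
-\tr(\rho \log \pi) \;\geq\; -\lambda \tr(\rho) - \tr(\rho \log \sigma) \;=\; -\lambda - \tr(\rho \log \sigma),
\end{equation}
where we used $\tr(\rho) = 1$. Adding $\tr(\rho \log \rho)$ to both sides yields $S(\rho \| \pi) \geq S(\rho \| \sigma) - \lambda$, as desired.

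The only step with any subtlety is the operator monotonicity of $\log$ itself, but this is a classical fact (L\"{o}wner's theorem, or a direct integral representation) and requires no further comment. The support bookkeeping is the only other thing to be careful about, and it is handled by the case split at the start. Overall I expect the proof to be essentially three lines long.
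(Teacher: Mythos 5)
Your proof is correct and is exactly the argument the paper intends: the paper's entire proof is the single remark that the lemma ``follows directly from the operator monotonicity of the log function,'' and your write-up simply fills in the (correct) details of that step, including the routine support bookkeeping.
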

The proof of the lemma follows directly from the operator monotonicity of the log function.\qed

Lemma \ref{AudenaertE} is due to Audenaert and Eisert:

\begin{lemma} [Theorem 3 of \cite{AE05}] \label{AudenaertE}
For all states $\rho$  and $\sigma$ on a $d$-dimensional Hilbert space, with $T = \Vert \rho - \sigma \Vert_1$ and $\beta = \lambda_{\min}(\sigma)$,
\be
S(\rho || \sigma) \leq T \log(d) + \min \left( - T \log T, \frac{1}{e}   \right) - \frac{T \log \beta}{2}.
\ee
\end{lemma}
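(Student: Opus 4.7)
The plan is to reduce the bound to two separate, easier estimates by using the algebraic identity
\[
S(\rho\|\sigma) \;=\; -S(\rho) - \tr(\rho\log\sigma) \;=\; \bigl[S(\sigma)-S(\rho)\bigr] \;+\; \bigl[-\tr((\rho-\sigma)\log\sigma)\bigr],
\]
obtained by adding and subtracting $\tr(\sigma\log\sigma)$. Call the bracketed terms $A$ and $B$. Since the relative entropy is nonnegative, $S(\rho\|\sigma)=A+B\leq |A|+|B|$, so it suffices to bound $|A|$ and $|B|$ separately, identifying them with the two contributions on the right-hand side of the claimed inequality.

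The bound on $|A|$ is a Fannes-type continuity estimate for the von Neumann entropy. I would invoke (or prove from scratch) the standard inequality $|S(\rho)-S(\sigma)|\leq T\log d+\eta(T)$, where $\eta(T)=-T\log T$ for small $T$ and $\eta(T)\leq 1/e$ in general since the function $-x\log x$ on $[0,1]$ attains its maximum $1/e$ at $x=1/e$; combining gives the form $\min(-T\log T,1/e)$ stated in the lemma. This is the less novel piece and can be taken as standard (indeed, tighter versions such as Audenaert's optimal Fannes inequality would suffice a fortiori).

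The bound on $|B|$ is where one exploits the spectral assumption $\sigma\geq\beta I$. Use the Jordan decomposition $\rho-\sigma=P-Q$ with $P,Q\geq 0$ having orthogonal supports; since $\rho-\sigma$ is traceless and $\|\rho-\sigma\|_1=\tr P+\tr Q=T$, one gets $\tr P=\tr Q=T/2$. Because $\beta I\leq \sigma\leq I$, the operator $-\log\sigma$ is positive semidefinite with spectrum in $[0,-\log\beta]$. Writing $B=[-\tr(P\log\sigma)]-[-\tr(Q\log\sigma)]$ as a difference of two nonnegative numbers, each of which is at most $(-\log\beta)\cdot(T/2)$, we conclude $|B|\leq -T\log\beta/2$.

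Combining, $S(\rho\|\sigma)\leq |A|+|B|\leq T\log d+\min(-T\log T,1/e)-T\log\beta/2$, which is the claimed inequality. The main obstacle (such as it is) lies in the Fannes-type bound for $|A|$ in exactly the form stated: the factor $\log d$ (rather than $\log(d-1)$) and the clean majorant $\min(-T\log T,1/e)$ are cruder than Audenaert's optimal continuity inequality, but they follow from the elementary Fannes argument bounding $|S(\rho)-S(\sigma)|$ by summing $|-\lambda_i\log\lambda_i+\mu_i\log\mu_i|$ over matched eigenvalues of $\rho$ and $\sigma$. The $|B|$ estimate, by contrast, is a direct consequence of operator monotonicity of the logarithm together with the $T/2$ normalization of the positive and negative parts of $\rho-\sigma$, and is where the $1/2$ in front of $T\log\beta$ enters.
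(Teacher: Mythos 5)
The paper does not prove this lemma at all: it is imported verbatim as Theorem 3 of Audenaert--Eisert \cite{AE05}, so there is no in-paper proof to compare against. Your argument is correct and is essentially the original proof from \cite{AE05}: the decomposition $S(\rho\|\sigma)=\bigl[S(\sigma)-S(\rho)\bigr]-\tr\bigl((\rho-\sigma)\log\sigma\bigr)$, Fannes' continuity bound for the first bracket, and for the second the observation that $\rho-\sigma$ is traceless with Jordan parts of trace $T/2$ each while the spectrum of $-\log\sigma$ lies in $[0,-\log\beta]$, which yields $|\tr((\rho-\sigma)\log\sigma)|\le -\tfrac{T}{2}\log\beta$ and is exactly where the factor $\tfrac12$ enters.
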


The next lemma is due to M. Piani. It will suffice to state it for the
case where $\mathbb{M}$ is the set of all measurements. 


\begin{lemma} \label{Piani}
[Theorem 1 of \cite{Piani09}] Consider two systems $X$ and $Y$ with joint Hilbert space ${\cal H}_X \otimes {\cal H}_Y$, and a convex reference set $K$. Suppose the reference set $K$ is such that for all POVM elements $M_i$ and all $\sigma_{XY} \in K$, $\tr_{X}(M_i^X  \sigma_{XY}) \in P$ (up to normalization). Then for every $\rho_{XY}$,
\begin{eqnarray}
&& \min_{\sigma_{XY} \in K} S(\rho_{XY} || \sigma_{XY})
  \\ &\geq& \min_{\sigma_X \in K} \mathbb{M}S(\rho_X || \sigma_X) 
+ \min_{\sigma_Y \in K} S(\rho_Y || \sigma_Y). \nonumber
\end{eqnarray}
\end{lemma}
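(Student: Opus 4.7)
The plan is to derive the inequality by applying a POVM on the $X$ system, invoking monotonicity of relative entropy, and then decomposing the resulting classical-quantum relative entropy with the chain rule. Fix any $\sigma_{XY}\in K$ and any POVM $\{M_i\}$ on $X$, and consider the quantum-to-classical channel $\mathcal{M}(\alpha)=\sum_i\tr(M_i\alpha)\ket{i}\bra{i}$. Monotonicity of the relative entropy under $\mathcal{M}\otimes\id_Y$ yields
\[
S(\rho_{XY}\|\sigma_{XY}) \,\geq\, S\!\left((\mathcal{M}\otimes\id)\rho_{XY}\,\|\,(\mathcal{M}\otimes\id)\sigma_{XY}\right).
\]
The right-hand side is the relative entropy of two classical-quantum states, which the standard chain rule rewrites as $S(p\|q)+\sum_i p_i\,S(\tilde\rho_{Y|i}\|\tilde\sigma_{Y|i})$, with $p_i=\tr(M_i\rho_X)$, $q_i=\tr(M_i\sigma_X)$, and $\tilde\rho_{Y|i},\tilde\sigma_{Y|i}$ the normalized conditional states on $Y$.

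The first term $S(p\|q)$ is exactly the classical relative entropy of the two measurement outcome distributions on $X$, so the supremum over POVMs turns it into $\mathbb{M}S(\rho_X\|\sigma_X)$ by definition. The real work is the second term. Here I would invoke the hypothesis on $K$: each conditional state $\tilde\sigma_{Y|i}\propto\tr_X(M_i\sigma_{XY})$ lies in $K$, so $S(\tilde\rho_{Y|i}\|\tilde\sigma_{Y|i})\geq f(\tilde\rho_{Y|i})$ where I define $f(\pi):=\min_{\tau\in K} S(\pi\|\tau)$. The functional $f$ is convex, which follows from convexity of $K$ together with joint convexity of relative entropy: picking optimizers $\tau_i\in K$ for each $\rho_i$, the mixture $\sum_i\lambda_i\tau_i$ still lies in $K$, so $f\!\left(\sum_i\lambda_i\rho_i\right)\leq S\!\left(\sum_i\lambda_i\rho_i\,\|\,\sum_i\lambda_i\tau_i\right)\leq \sum_i\lambda_i f(\rho_i)$. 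Combining convexity of $f$ with the identity $\sum_i p_i\tilde\rho_{Y|i}=\rho_Y$, which uses $\sum_i M_i=I$, yields $\sum_i p_i\,S(\tilde\rho_{Y|i}\|\tilde\sigma_{Y|i})\geq f(\rho_Y)=\min_{\tau\in K} S(\rho_Y\|\tau)$.

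Assembling the two parts, for every fixed $\sigma_{XY}\in K$ and every POVM I obtain
\[
S(\rho_{XY}\|\sigma_{XY}) \,\geq\, S(p\|q) + \min_{\tau\in K} S(\rho_Y\|\tau).
\]
I then supremize over POVMs on the right, promoting $S(p\|q)$ to $\mathbb{M}S(\rho_X\|\sigma_X)$ while the $Y$-term is unaffected, and finally take the infimum over $\sigma_{XY}\in K$ on both sides; the $Y$-term again passes through unchanged because it does not depend on $\sigma_{XY}$. I read ``$\sigma_X\in K$'' and ``$\sigma_Y\in K$'' in the target inequality as the marginals of elements of $K$, the natural convention and one compatible with the hypothesis applied to the trivial one-outcome POVM. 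The only slightly delicate step is establishing convexity of $f$, which essentially couples the convexity of $K$ with the joint convexity of $S(\cdot\|\cdot)$; everything else is a repackaging of monotonicity, the classical-quantum chain rule, and two sequential optimizations.
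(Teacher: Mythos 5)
The paper does not actually prove this lemma---it is quoted as Theorem~1 of \cite{Piani09}---and your argument is a faithful reconstruction of Piani's original proof: measure $X$, invoke monotonicity, split the resulting cq relative entropy into $S(p\|q)$ plus the averaged conditional term, and push the latter down to $\min_{\sigma_Y} S(\rho_Y\|\sigma_Y)$ via joint convexity of relative entropy together with $\sum_i p_i \tilde\rho_{Y|i} = \rho_Y$. Your reading of the (admittedly garbled) hypothesis---that the undefined set $P$ is the convex reference set on $Y$ over which the final minimization runs, so that the conditional states $\tilde\sigma_{Y|i}$ land exactly where they are needed for the convexity step---is the interpretation required for the argument and matches how the lemma is used in the paper, so the proof is correct.
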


The following Lemma is due to Fawzi and Renner~\cite{FR14} who stated it
in a slightly more general form. It was used in their proof of
Eq. (4) in the main text.  Below is a very similar, but
somewhat shorter, proof.

\begin{lemma}\label{lem:postselect}
Suppose $\rho_{X^nY^n}$ satisfies $\rho_{X^n} = \tau_{X^n}$ and
$[\rho, P_{XY}(\pi)]=0$ for all $\pi\in \cS_n$.  Then there exists a
measure $\mu$ over states $\sigma_{XY}$ (independent of $\rho$) with
each $\sigma_X = \tau_X$ 
and
\be
\rho_{X^nY^n} \leq n^{O(d_X^2d_Y^2)} \int \sigma^{\ot n} \mu(d\sigma),
\label{eq:constrained-post}\ee
where $d_X, d_Y$ are the dimensions of $X$ and $Y$.
\end{lemma}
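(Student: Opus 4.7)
The plan is to extend the Christandl--K\"onig--Renner post-selection technique to the constrained setting where $\rho_{X^n}=\tau_{X^n}$. Plain post-selection on $(XY)^{\otimes n}$ already provides a domination $\rho_{X^nY^n}\leq n^{O(d_X^2 d_Y^2)}\int\sigma_{XY}^{\otimes n}\mu_0(d\sigma)$ with $\mu_0$ an arbitrary measure on states of $XY$; the extra task is to refine the argument so that the resulting measure $\mu$ is supported only on states with $\sigma_X=\tau_X$, while preserving the polynomial exponent $O(d_X^2 d_Y^2)$.

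I would follow a purification-based route. First, purify $\rho_{X^nY^n}$ to a permutation-invariant pure state $|\Psi\rangle_{X^nY^nZ^n}$ on an appropriate ancilla $Z$. Such a symmetric purification exists because $\rho$ commutes with permutations; the hypothesis $\rho_{X^n}=\tau_{X^n}$ then forces $|\Psi\rangle$ to be maximally entangled across the cut $X^n\,|\,(YZ)^n$. Next, pick a reference pure state $|\phi_0\rangle_{XYZ}$ with $\tr_{YZ}|\phi_0\rangle\langle\phi_0|=\tau_X$ and introduce the de Finetti reference
\[
\xi_{X^nY^nZ^n}:=\int_{U(d_Yd_Z)}\!\!\big[(I_X\otimes U_{YZ})|\phi_0\rangle\langle\phi_0|(I_X\otimes U_{YZ})^\dagger\big]^{\otimes n}\, dU,
\]
each integrand manifestly having $\tau_X$-marginal on $X$. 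Schur--Weyl duality for the diagonal $U(d_Yd_Z)$-action on $(YZ)^{\otimes n}$ decomposes $\xi$ into isotypic components, on each of which, by Schur's lemma, $\xi$ is a multiple of the block projector; a dimension count then yields $|\Psi\rangle\langle\Psi|\leq n^{O(d_X^2 d_Y^2)}\,\xi$. Tracing out $Z^n$ and pushing the Haar measure forward under $U\mapsto \tr_Z[(I_X\otimes U)|\phi_0\rangle\langle\phi_0|(I_X\otimes U^\dagger)]$ delivers~(\ref{eq:constrained-post}) with $\mu$ supported on states with $\sigma_X=\tau_X$, and crucially $\mu$ is independent of $\rho$ (it only depends on the dimensions and on $|\phi_0\rangle$).

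The main obstacle lies in the Schur--Weyl step: getting the exponent $O(d_X^2 d_Y^2)$ rather than the weaker $O((d_X d_Y d_Z)^2)$ that a naive post-selection on the full symmetric subspace of $(XYZ)^{\otimes n}$ would give. The gain must come from jointly exploiting two facts about $|\Psi\rangle$ --- permutation invariance and maximal entanglement on the $X^n$ cut --- to restrict attention to the isotypic blocks corresponding to the $(d_Yd_Z)^2$-parameter orbit of pure states with $\tau_X$-marginal, rather than to the full projective space of $XYZ$. Once $\dim Z$ is chosen so that $d_Y d_Z$ is commensurate with $d_Xd_Y$, the bound on the sum of relevant Weyl-module dimensions --- i.e., partitions of $n$ with at most $(d_Yd_Z)^2$ effective parameters --- is polynomial in $n$ with exponent $O(d_X^2 d_Y^2)$, closing the argument.
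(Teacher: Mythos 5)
Your proposal follows essentially the same route as the paper's proof: a permutation-invariant purification on $X^nY^nZ^n$ with $d_Z=d_Xd_Y$, a reference family given by the unitary orbit on $YZ$ of a fixed pure state with $\tau_X$ marginal (exactly the paper's family $\ket{\sigma(V)}=d_X^{-1/2}\sum_i\ket{i}V\ket{i}$ over isometries $V:X\to YZ$), a Schur--Weyl dimension count dominating the purification by the orbit average, and a final partial trace over $Z^n$ with a push-forward of the measure. The step you single out as the ``main obstacle'' is resolved in the paper precisely as you anticipate: applying Schur duality separately to $X^n$ and $(YZ)^n$, permutation invariance forces the two irrep labels to coincide (so only $\lambda\in\Par(n,d_X)$ contribute), and $\rho_{X^n}=\tau_{X^n}$ fixes the weights $c_\lambda$ to match those of $\E_V[\sigma(V)^{\ot n}]$, leaving a prefactor of $\max_\lambda \dim\cQ_\lambda^{d_X}\dim\cQ_\lambda^{d_Yd_Z}=n^{O(d_X^2d_Y^2)}$.
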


\begin{proof}
First purify $\rho$ to a state $\ket{\rho}_{X^nY^nZ^n}\in\Sym^n(XYZ)$ (the symmetric subspace in $(XYZ)^{n}$)
with $d_Z = d_Xd_Y$ using Lemma 4.2.2 of \cite{Ren05}.  For $V$ an
isometry from $X \ra YZ$, define 
\be \ket{\sigma(V)}_{XYZ} = \frac{1}{\sqrt{d_X}} \sum_{i=1}^{d_X}
\ket{i} V\ket i\ee
and $\sigma(V) = \proj{\sigma(V)}$.  Observe that $\sigma(V)_X =
\tau_X$.  We will show that
\be
\proj{\rho} \leq n^{O(d_X^2d_Y^2)}
\int \sigma(V)^{\ot n} \mu(d\sigma),
\label{eq:purified-post}\ee
which will imply Eq. \eq{constrained-post}.

Our strategy will be to expand both sides of Eq. \eq{purified-post} in the
Schur basis.  Schur duality uses the following notation:
\be (\bbC^d)^{\ot n} \iso{\cU_{d} \times \cS_n}
\bigoplus_{\lambda\in\Par(n,d)} \cQ_\lambda^{d} \hat{\ot} \cP_\lambda.
\label{eq:schur-decomp}\ee
This is explained in detail in \cite{Har05}, but briefly,
$\Par(n,d)$ denotes the set of partitions of $n$ into $\leq d$
parts, $\cQ_{\lambda}^{d}$ is an irrep of the unitary group
$\cU_{d}$,
$\cP_\lambda$ is an irrep of the symmetric group $\cS_n$, $\hat\ot$
means that we interpret the tensor product as an irrep of $\cU_d
\times \cS_n$,  and
$\iso{\cU_{d} \times \cS_n}$ means that the isomorphism respects this
representation structure.
Let $U_{\Sch}$ denote the unitary transform realizing the
isomorphism in Eq. \eq{schur-decomp}.  We can write
\begin{multline} (U_{\Sch}^X \ot U_{\Sch}^{YZ})\ket\rho
= \\
 \sum_{\substack{\lambda_1\in\Par(n,d_X) \\
    \lambda_2\in\Par(n,d_Yd_Z)}}
 c_{\lambda_1,\lambda_2}
\ket{\lambda_1}_X  \ket{\lambda_2}_{YZ} 
\ket{\chi_{\lambda_1,\lambda_2}} \ket{\theta_{\lambda_1,\lambda_2}},
\label{eq:rho-decomp-first}\end{multline}
where $\sum_\lambda |c_\lambda|^2=1$,
$\ket{\chi_{\lambda_1,\lambda_2}},\ket{\theta_{\lambda_1,\lambda_2}}$
are arbitrary unit vectors in 
$\cQ_{\lambda_1}^{d_X}\ot\cQ_{\lambda_2}^{d_Yd_Z}$ and 
$\cP_{\lambda_1}\ot\cP_{\lambda_2}$ respectively.
However, the permutation invariance and Schur's Lemma mean that (following arguments
along the lines of Section 6.4.1 of \cite{Har05}) the only nonzero terms
have $\lambda_1=\lambda_2$ and $\ket{\theta_{\lambda,\lambda}} =: \ket{\Phi_\lambda}$ is the unique permutation-invariant state in
$\cP_\lambda \ot \cP_\lambda$.   Thus we can (using $d_X \leq d_Yd_Z$) rewrite 
Eq. \eq{rho-decomp-first} as 
\be (U_{\Sch}^X \ot U_{\Sch}^{YZ})\ket\rho
= \sum_{\lambda\in\Par(n,d_X)} c_{\lambda}
\ket{\lambda}_X  \ket{\lambda}_{YZ}  
\ket{\chi_{\lambda}} \ket{\Phi_{\lambda}}.
\label{eq:rho-decomp-second}\ee
To calculate $c_\lambda$ we use the
fact that $\rho_{X^n} = \tau_{X^n}$.  Thus measuring the irrep label
should yield outcome $\lambda$ with probability $\dim\cP_\lambda \dim
\cQ_{\lambda}^{d_X} / d_X^n$, and we have
\be c_\lambda = \sqrt{\frac{\dim\cP_\lambda
    \dim\cQ_{\lambda}^{d_X}}{d_X^n}}.\label{eq:c-lambda}\ee

A similar argument implies that
\begin{multline} (U_{\Sch}^X \ot U_{\Sch}^{YZ})\ket{\sigma(V)}^{\ot n}
= \\ \sum_{\lambda\in\Par(n,d_X)} c_{\lambda}
\ket{\lambda}_X \ket{\lambda}_{YZ} 
\ket{\chi_{\lambda}(V)} \ket{\Phi_{\lambda}},
\label{eq:sigma-decomp}\end{multline}
for some states $\ket{\chi_{\lambda}(V)}$.  The coefficients
$c_\lambda$ are the same as in Eq. \eq{c-lambda} because $\sigma(V)^{\ot
  n}_{A^n} = \tau_{A^n}$. 
Averaging $\sigma(V)^{\ot n}$ over all isometries $V$ yields a state
that commutes with $(U_X \ot I_{YZ})^{\ot n}$ and $(I_X \ot
U_{YZ})^{\ot n}$ for all $U_X \in \cU_{d_X}, U_{YZ} \in \cU_{d_Yd_Z}$.  Thus
\begin{multline}
(U_{\Sch}^X \ot U_{\Sch}^{YZ}) \E_V[\sigma(V)^{\ot n}] 
(U_{\Sch}^X \ot U_{\Sch}^{YZ})^\dag = \\
\sum_{\lambda\in\Par(n,d_X)} |c_{\lambda}|^2
\proj{\lambda,\lambda} 
\ot \tau_{\cQ_\lambda^{d_X}}\ot \tau_{\cQ_\lambda^{d_YdZ}}
\ot \proj{\Phi_{\lambda}}.
\label{eq:E-sigma-decomp}\end{multline}

It follows from \eqs{rho-decomp-second}{E-sigma-decomp} that
\ban
 \proj{\rho} &
\leq (\max_\lambda \dim \cQ_\lambda^{d_X}
\dim \cQ_\lambda^{d_Yd_Z}) \E_V[\sigma(V)^{\ot n}] 
\\ & = \binom{d_X + n-1}{n}\binom{d_Yd_Z + n-1}{n}
\E_V[\sigma(V)^{\ot n}] 
\\ &\leq n^{d_X}n^{d_Yd_Z}\E_V[\sigma(V)^{\ot n}] 
\\ & =n^{d_X^2d_Y^2+d_X}\E_V[\sigma(V)^{\ot n}].
\ean
\end{proof}

\subsection{The measured entropy lower bound}
To prove the inequality stated in the Eq. (17) of the main text, we first prove the following lemma which is a version of the postselection technique of~\cite{CKR09} for quantum operations. 

\begin{lemma}~\label{lem:ps-channel}
For every permutation-invariant quantum operation $\Lambda : B^n \rightarrow B^nC^n$ and every state $\pi
\in B^n$,
\begin{equation} \label{postselectionqop}
\Lambda(\pi) \leq \poly(n) \int  {\cal E}^{\otimes n}(\pi) \mu(d{\cal E}),
\end{equation}
where $\mu$ is a measure over quantum operations ${\cal E}: B \rightarrow BC$.
\end{lemma}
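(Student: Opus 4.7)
The plan is to reduce the channel statement to the state statement of Lemma~\ref{lem:postselect} via the Choi--Jamio\l{}kowski isomorphism. We introduce an auxiliary reference system $B^n_{\mathrm{in}}$ and form the normalized Choi state
\[
\tilde J(\Lambda) := d_B^{-n}\,(\id_{B^n_{\mathrm{in}}} \otimes \Lambda)\bigl(|\Phi\rangle\langle\Phi|\bigr) \,\in\, B^n_{\mathrm{in}} \otimes (BC)^n_{\mathrm{out}},
\]
where $|\Phi\rangle = \sum_{\vec\imath}|\vec\imath\rangle|\vec\imath\rangle$ is the unnormalized maximally entangled vector between $B^n_{\mathrm{in}}$ and the input copy of $B^n$. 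Trace preservation of $\Lambda$ forces $\tilde J(\Lambda)_{B^n_{\mathrm{in}}} = \tau_{B^n}$, and the permutation invariance of $\Lambda$ (which simultaneously permutes the $n$ input copies of $B$ and the $n$ output copies of $BC$) translates into commutation of $\tilde J(\Lambda)$ with the diagonal $\cS_n$-action on the $n$ pairs $(B_{\mathrm{in},k}, (BC)_{\mathrm{out},k})$.

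This is exactly the hypothesis of Lemma~\ref{lem:postselect} with $X = B_{\mathrm{in}}$ and $Y = (BC)_{\mathrm{out}}$, which yields
\[
\tilde J(\Lambda) \,\leq\, n^{O(d_B^4 d_C^2)} \int \sigma^{\otimes n}\,\mu(d\sigma),
\]
where $\mu$ is a measure over states $\sigma$ on $B\otimes BC$ satisfying $\sigma_B = \tau_B$. Every such $\sigma$ is the normalized Choi state $\tilde J(\cE)$ of a unique CPTP channel $\cE: B \to BC$ (positivity giving complete positivity, $\sigma_B=\tau_B$ giving trace preservation), so we may reparametrize $\mu$ as a measure over channels.

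To translate the bound back into a statement about $\Lambda(\pi)$, we use the Choi inversion formula
\[
\Lambda(\pi) \,=\, d_B^n\,\tr_{B^n_{\mathrm{in}}}\!\bigl[(\pi^T \otimes I_{(BC)^n})\,\tilde J(\Lambda)\bigr].
\]
Since $\pi^T \otimes I \geq 0$, pairing against this operator and tracing out $B^n_{\mathrm{in}}$ preserve operator inequalities, giving
\[
\Lambda(\pi) \,\leq\, \poly(n)\!\int d_B^n\,\tr_{B^n_{\mathrm{in}}}\!\bigl[(\pi^T\!\otimes I)\,\tilde J(\cE)^{\otimes n}\bigr]\mu(d\cE) \,=\, \poly(n)\!\int \cE^{\otimes n}(\pi)\,\mu(d\cE),
\]
using that $\tilde J(\cE)^{\otimes n}$ coincides with the normalized Choi state of $\cE^{\otimes n}$ after reshuffling the tensor factors into the ordering $B^n_{\mathrm{in}} \otimes (BC)^n_{\mathrm{out}}$.

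The main point requiring care is the bookkeeping of register orderings: Lemma~\ref{lem:postselect} permutes $n$ copies of the block $X\otimes Y$, while in $\tilde J(\Lambda)$ the input copies are naturally grouped together and the output copies grouped together, so one must carefully identify the diagonal $\cS_n$-action on $(B\otimes BC)^{\otimes n}$ with the symmetry inherited from the permutation invariance of $\Lambda$. Once this identification is fixed the argument is a direct combination of Lemma~\ref{lem:postselect} with the Choi correspondence, and the prefactor $n^{O(d_B^4 d_C^2)}$ supplied by the lemma is polynomial in $n$ for fixed local dimensions $d_B, d_C$, as required.
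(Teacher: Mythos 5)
Your proposal is correct and follows essentially the same route as the paper's proof: pass to the (normalized) Choi--Jamio\l{}kowski state of $\Lambda$, note that permutation invariance and trace preservation put it in the hypotheses of Lemma~\ref{lem:postselect} with $X=\overline{B}$ and $Y=BC$, identify each $\sigma$ with $\sigma_{\overline B}=\tau_{\overline B}$ as the Choi state of a channel $\cE:B\to BC$, and invert via $\Lambda(\pi)=d_B^{\,n}\tr_{\overline{B}^n}[(\pi^T\otimes I)\,\tilde J(\Lambda)]$. The only difference is that you spell out the register-ordering bookkeeping more explicitly, which the paper leaves implicit.
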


\begin{proof}
  Since $\Lambda : B^n \rightarrow B^nC^n$ is permutation-invariant,
  it follows that its Jamiolkowski state
  $J_{\Lambda} \in {\cal D}((\overline{B}^n \otimes B^n \otimes C^n)$
  (with $\overline{B} \cong B$) is permutation-invariant.  We now
  apply \lemref{postselect} in the Supplemental Material to find a distribution $\mu$ over
  $\sigma\in\cD(\overline{B} \ot B \ot C)$ with \be J_\Lambda \leq
  \poly(n) \int \sigma^{\ot n} \mu(d\sigma),\ee and each
  $\sigma_{\bar B} = \tau_{\bar B}$.  This latter condition means that
  each $\sigma$ can be also thought of as $J_{\cE}$ for some
  $\cE: B \ra BC$. We complete the proof using the relation:
\begin{eqnarray}
&&\tr_{\overline{B}^n} ( (\pi^T \otimes I_{B^nC^n}) J_{\Lambda}) \\ &\leq& \poly(n) \int \tr_{\overline{B}^n} ( (\pi^T \otimes I_{B^nC^n}) J_{{\cal E}}^{\otimes n}) \mu(d {\cal E}), \nonumber
\end{eqnarray}
and the fact that $\tr_{\overline{B}^n} ( (\pi^T \otimes I_{B^nC^n}) J_{\Lambda})  = \Lambda(\pi) / \text{dim}(B)^n$ and $\tr_{\overline{B}^n} ( (\pi^T \otimes I_{B^nC^n}) J_{{\cal E}}^{\otimes n}) = {\cal E}^{\otimes n}(\pi) / \text{dim}(B)^n$.
\end{proof}

We now turn to proving the measured entropy lower bound:

\begin{prop}[Eq. (17) in the main text]~\label{prop:measured}
For every state $\rho_{BCR}$ one has
\begin{eqnarray} 
&& \lim_{n \rightarrow \infty} \min_{\Lambda : B^n \rightarrow B^nC^n} \frac{1}{n} S(\rho_{BCR}^{\otimes n}|| \Lambda \otimes \id_{R^n}(\rho_{BR}^{\otimes n})) \nonumber \\ &\geq&  \min_{\Lambda : B \rightarrow BC}  \mathbb{M}S( \rho_{BCR} || \Lambda \otimes \id_R(\rho_{BR}) ).\label{prop2statmt}
\end{eqnarray}
\end{prop}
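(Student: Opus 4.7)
My plan is to combine the postselection technique for quantum operations (Lemma~\ref{lem:ps-channel}) with an iterated application of Piani's Lemma~\ref{Piani}. First, I would argue that the minimizer $\Lambda_n:B^n\to B^nC^n$ on the left-hand side can be taken to be permutation invariant. Given any $\Lambda_n$, define its symmetrization $\bar{\Lambda}_n := \frac{1}{n!}\sum_{\pi\in\cS_n}\pi\circ\Lambda_n\circ\pi^{-1}$, where $\pi$ acts by the same permutation on input and output copies. Permutation invariance of both $\rho_{BCR}^{\otimes n}$ and $\rho_{BR}^{\otimes n}$ implies that every conjugate $\pi\Lambda_n\pi^{-1}$ gives the same relative entropy as $\Lambda_n$, and joint convexity of $S(\cdot\|\cdot)$ in its second argument then shows $\bar{\Lambda}_n$ does at least as well. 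Hence we may assume $\Lambda_n$ is permutation invariant.

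Next, apply Lemma~\ref{lem:ps-channel} to obtain a measure $\mu$ over channels $\cE:B\to BC$ such that $\Lambda_n\leq \poly(n)\int \cE^{\otimes n}\mu(d\cE)$ in the completely positive order. Because the defect is CP, tensoring with $\id_{R^n}$ preserves the inequality, so applying both sides to $\rho_{BR}^{\otimes n}$ yields
\be
(\Lambda_n\otimes\id_{R^n})(\rho_{BR}^{\otimes n}) \leq \poly(n)\int \sigma_{\cE}^{\otimes n}\mu(d\cE),
\ee
where $\sigma_{\cE}:=(\cE\otimes\id_R)(\rho_{BR})$. Lemma~\ref{opmon} then converts this operator inequality into a loss of only $O(\log n)$ in the relative entropy, which is $o(n)$ and vanishes after dividing by $n$.

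The core step is to lower-bound $\frac{1}{n}S\bigl(\rho_{BCR}^{\otimes n}\,\big\|\,\int \sigma_{\cE}^{\otimes n}\mu(d\cE)\bigr)$ by $\min_{\cE}\mathbb{M}S(\rho_{BCR}\|\sigma_{\cE})$, which I would achieve by iterating Piani. Define the convex sets
\be
K_k := \Bigl\{\int \sigma_{\cE}^{\otimes k}\nu(d\cE): \nu \text{ a probability measure}\Bigr\}
\ee
and $s_k:=\min_{\sigma\in K_k}S(\rho_{BCR}^{\otimes k}\|\sigma)$. I apply Lemma~\ref{Piani} with $X$ the last copy of $BCR$, $Y$ the first $n-1$ copies, reference set $K=K_n$, and conditional-reference set $P=K_{n-1}$. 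Its hypothesis is verified by the calculation
\be
\tr_X\!\Bigl(M_i\int \sigma_{\cE}^{\otimes n}\nu(d\cE)\Bigr) = \int \sigma_{\cE}^{\otimes(n-1)}\tr(M_i\sigma_{\cE})\nu(d\cE),
\ee
which after normalization is of the form $\int \sigma_{\cE}^{\otimes(n-1)}\nu_i(d\cE)\in K_{n-1}$. Piani's conclusion then gives the recursion $s_n\geq \min_{\sigma\in K_1}\mathbb{M}S(\rho_{BCR}\|\sigma)+s_{n-1}$, which combined with the base case $s_1\geq \min_{K_1}\mathbb{M}S$ (from the general inequality $S\geq \mathbb{M}S$) yields $s_n\geq n\min_{\sigma\in K_1}\mathbb{M}S(\rho_{BCR}\|\sigma)$ by induction. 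Finally, $K_1=\{(\cE'\otimes\id_R)(\rho_{BR}):\cE':B\to BC\text{ a channel}\}$ because convex combinations of channels are channels, so the last minimum equals $\min_{\cE'}\mathbb{M}S(\rho_{BCR}\|(\cE'\otimes\id_R)(\rho_{BR}))$, the desired right-hand side.

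I expect the main obstacle to be verifying the hypothesis of Piani's Lemma~\ref{Piani} cleanly for \emph{every} POVM element on the distinguished copy --- in particular, ensuring that the normalized $Y$-conditional truly lies in $K_{n-1}$ for any $M_i$, and that the minimizations ``$\min_{\sigma_X\in K}$'' and ``$\min_{\sigma_Y\in K}$'' in Piani's statement correspond to the marginal sets $K_1$ and $K_{n-1}$ required here. Once this interpretive check is made, the $O(\log n)$ loss from postselection and the additive constants from Piani both vanish under $\lim_n\frac{1}{n}$, so the argument reduces to careful bookkeeping.
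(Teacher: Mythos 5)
Your proposal is correct and follows essentially the same route as the paper's proof: symmetrize the channel using convexity of relative entropy in its second argument, apply the postselection lemma for permutation-invariant operations together with the operator-monotonicity bound to absorb the $\poly(n)$ factor, and then iterate Piani's lemma over the copies to peel off one measured relative entropy per copy. Your verification of Piani's hypothesis via $\tr_X\bigl(M_i\int\sigma_{\cE}^{\otimes n}\nu(d\cE)\bigr)\propto\int\sigma_{\cE}^{\otimes(n-1)}\nu_i(d\cE)$ and the observation that $K_1$ collapses to single-channel reconstructions are exactly the checks the paper leaves implicit.
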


\begin{proof}

For $\Lambda : B^n \rightarrow B^n C^n$, define
\begin{equation}
\tilde{\Lambda}(\omega) := 
\frac{1}{n!} \sum_{\pi \in S_n} P_{BC}(\pi) \Lambda( P^{\cal y}_B(\pi)
\omega
 P_B(\pi)) P_{BC}(\pi)^{\cal y},
\end{equation}
with $P_{X}(\pi)$ a representation of a permutation $\pi$ from $S_n$
(symmetric group of order $n$) in $X^{\otimes n}$ such that $P_X(\pi)
\ket{a_1, \ldots, a_n} = \ket{a_{\pi^{-1}(1)}, \ldots,
  a_{\pi^{-1}(n)}}$. Let Sym be the set of all permutation-invariant
quantum operations, i.e. all $\Lambda$ such that $\Lambda = \tilde\Lambda$.

Using Proposition~\ref{prop:asym} in the main text and the fact that the relative entropy is doubly convex we obtain~\footnote{In more detail: $S(\rho_{BCR}^{\otimes n} || \Lambda (\rho_{BR}^{\otimes n} )) = \mathop{\mathbb{E}}_{\pi \in S_n}$  $S( P_{CBR}^{\pi} \rho_{BCR}^{\otimes n} (P_{CBR}^{\pi})^{\cal y} ||  P_{CBR}^{\pi} \Lambda ((P_{BR}^{\pi})^{\cal y} \rho_{BR}^{\otimes n} P_{BR}^{\pi}) (P_{CBR}^{\pi})^{\cal y})$ $= \mathop{\mathbb{E}}_{\pi \in S_n} S(\rho_{BCR}^{\otimes n} || P_{CB}^{\pi} \Lambda ((P_{B}^{\pi})^{\cal y} \rho_{BR}^{\otimes n} P_{B}^{\pi}) (P_{CB}^{\pi})^{\cal y}) \geq$ $ S\left ( \rho_{BCR}^{\otimes n}  || \mathop{\mathbb{E}}_{\pi \in S_n}  P_{CB}^{\pi} \Lambda ( (P_{B}^{\pi})^{\cal y} \rho_{BR}^{\otimes n} P_{B}^{\pi}) (P_{CB}^{\pi})^{\cal y} \right)$, with $P_{CBR}^{\pi} := P_{CBR}(\pi)$.}
\begin{eqnarray}
&& \lim_{n \rightarrow \infty} \min_{\Lambda : B^n \rightarrow B^nC^n} \frac{1}{n} S(\rho_{BCR}^{\otimes n}|| \Lambda \otimes \id_{R^n}(\rho_{BR}^{\otimes n})) \nonumber  \\ &\geq& \lim_{n \rightarrow \infty} \min_{  \substack{\Lambda : B^n \rightarrow B^nR^n \\   \Lambda \in \text{Sym}}   } \frac{1}{n}S(\rho_{BCR}^{\otimes n}|| \Lambda \otimes \id_{R^n}(\rho_{BR}^{\otimes n})).  \nonumber
\end{eqnarray}

Lemma~\ref{lem:ps-channel} gives that for every $\Lambda_n : B^n \rightarrow B^nC^n \in$ Sym,
\begin{eqnarray}
&& (\Lambda_n \otimes \id_{R^n})(\rho_{BR}^{\otimes n}) \\ &\leq& \poly(n) \int ({\cal E} \otimes \id_R (\rho_{BR}))^{\otimes n} \mu_n(d{\cal E}), \nonumber
\end{eqnarray}
with $\mu(d{\cal E})$ a measure over quantum operations ${\cal E} : B \rightarrow BC$. Using the previous equation and the operator monotonicity of the $\log$ (see Lemma \ref{opmon} above),
\begin{eqnarray}   \label{eq21}
&& \lim_{n \rightarrow \infty} \min_{\Lambda : B^n \rightarrow B^nC^n} \frac{1}{n} S(\rho_{BCR}^{\otimes n}|| \Lambda \otimes \id_{R^n}(\rho_{BR}^{\otimes n}))  \\ &\geq& \lim_{n \rightarrow \infty} \min_{\mu_n} \frac{1}{n}S\left(\rho_{BCR}^{\otimes n} \hspace{0.1 cm} \Vert \int ({\cal E} \otimes \id_R (\rho_{BR}))^{\otimes n} \mu_n(d{\cal E})\right). \nonumber
\end{eqnarray}

To complete the proof we make use of Lemma~\ref{Piani} above. Consider the state $\rho_{BCR}^{\otimes n}$ and let $X$ be the first copy of $\rho_{BCR}$ in the tensor product and $Y$ the remaining $\rho_{BCR}^{\otimes n-1}$. Define
\be
K = \bigcup_{k \in \mathbb{N}} \left(\text{conv} \{ ( {\cal E} \otimes \id_R)( \rho_{BR} )^{\otimes k} \hspace{0.1 cm} : \hspace{0.1 cm} {\cal E} : B \rightarrow BC \} \right),
\ee
i.e. the convex hull of tensor products of reconstructed states. It is easy to check that $K$ satisfies the assumption of Lemma~\ref{Piani} above. Therefore:
\begin{eqnarray}
&& \min_{\mu_n}  S\left(\rho_{BCR}^{\otimes n} \hspace{0.1 cm} \Vert \int ({\cal E} \otimes \id_R (\rho_{BR}))^{\otimes n} \mu_n(d{\cal E})\right)  \\
&\geq& \min_{\mu}   \mathbb{M} S\left(\rho_{BCR} \hspace{0.1 cm} \Vert \int ({\cal E} \otimes \id_R (\rho_{BR})) \mu(d{\cal E})\right) \nonumber \\  &+&   \min_{\mu_{n-1}}  S\left(\rho_{BCR}^{\otimes n-1} \hspace{0.1 cm} \Vert \int ({\cal E} \otimes \id_R (\rho_{BR}))^{\otimes n-1} \mu_{n-1}(d{\cal E})\right).  \nonumber
\end{eqnarray}
Iterating the equation above $n$ times gives Eq.~\eqref{eq:prop2statmt}.
\end{proof}


\begin{thebibliography}{10}

\bibitem{LR73}
E.~Lieb and M.~Ruskai.
\newblock Proof of strong subadditivity of quantum-mechanical entropy.
\newblock {\em J.~Math.~Phys.}, 14:1938--1941, 1973.

\bibitem{BCY11}
F.~G. S.~L. Brand{\~{a}}o, M.~Christandl, and J.~Yard.
\newblock Faithful squashed entanglement.
\newblock {\em Commun. Math. Phys.}, 306(3):805--830, 2011,
  \href{http://arxiv.org/abs/1010.1750}{{\ttfamily arXiv:1010.1750}}.

\bibitem{CW04}
M.~Christandl and A.~Winter.
\newblock {`Squashed Entanglement' --- An Additive Entanglement Measure}.
\newblock {\em J. Math. Phys.}, 45(3):829--840, 2004,
  \href{http://arxiv.org/abs/quant-ph/0308088}{{\ttfamily
  arXiv:quant-ph/0308088}}.  
  
  \bibitem{DY08}
I.~Devetak and J.~Yard.
\newblock Exact cost of redistributing multipartite quantum states.
\newblock {\em Phys. Rev. Lett.}, 100:230501, Jun 2008,
  \href{http://arxiv.org/abs/quant-ph/0612050}{{\ttfamily
  arXiv:quant-ph/0612050}}.
  
  \bibitem{Kim13}
I.~H. Kim.
\newblock {\em Conditional independence in quantum many-body systems}.
\newblock PhD thesis, Caltech, 2013.

\bibitem{Kit13}
A.~Kitaev, 2013.
\newblock private communication.

\bibitem{PH11}
D.~Poulin and M.~Hastings.
\newblock Markov entropy decomposition: a variational dual for quantum belief
  propagation.
\newblock {\em Physical review letters}, 106:08403, 2011,
{{\ttfamily arXiv:1012.2050}}.

\bibitem{HJPW04}
P.~Hayden, R.~Jozsa, D.~Petz, and A.~Winter.
\newblock Structure of states which satisfy strong subadditivity of quantum
  entropy with equality.
\newblock {\em Comm. Math. Phys.}, 246:359, 2004,
  \href{http://arxiv.org/abs/quant-ph/0304007}{{\ttfamily
  arXiv:quant-ph/0304007}}.

\bibitem{ILW08}
B.~Ibinson, N.~Linden, and A.~Winter.
\newblock Robustness of quantum {Markov} chains.
\newblock {\em Commun. Math. Phys.}, 277(2):289--304, 2008,
  \href{http://arxiv.org/abs/quant-ph/0611057}{{\ttfamily
  arXiv:quant-ph/0611057}}.

\bibitem{LW14b}
K.~Li and A.~Winter.
\newblock Relative entropy and squashed entanglement.
\newblock {\em Commun. Math. Phys.}, 326(1):63--80, 2014,
  \href{http://arxiv.org/abs/1210.3181}{{\ttfamily arXiv:1210.3181}}.

\bibitem{AE05}
K.~M.~R. Audenaert and J.~Eisert.
\newblock Continuity bounds on the quantum relative entropy.
\newblock {\em J. Math. Phys.}, 46(10):--, 2005,
  \href{http://arxiv.org/abs/quant-ph/0503218}{{\ttfamily
  arXiv:quant-ph/0503218}}.

\bibitem{OS13}
J.~Oppenheim and S.~Strelchuk.
\newblock Robust quantum markov states, Dec 2013.
\newblock unpublished note.

\bibitem{BSW14}
M.~Berta, K.~P. Seshadreesan, and M.~M. Wilde.
\newblock Renyi generalizations of the conditional quantum mutual information,
  2014,  \href{http://arxiv.org/abs/1403.6102}{{\ttfamily arXiv:1403.6102}}.

\bibitem{Zha13}
L.~Zhang.
\newblock Conditional mutual information and commutator.
\newblock {\em Int. J. Theor. Phys.}, 52(6):2112--2117, 2013,
  \href{http://arxiv.org/abs/1212.5023}{{\ttfamily arXiv:1212.5023}}.

\bibitem{FR14}
O.~Fawzi and R.~Renner.
\newblock Quantum conditional mutual information and approximate markov chains,
  2014,  \href{http://arxiv.org/abs/1410.0664}{{\ttfamily arXiv:1410.0664}}.

\bibitem[{kit()}]{kitaev_topological}
Alexei Kitaev. \emph{\bibinfo{title}{Topological {Phases} of {Matter} {Program}
 , {Caltech} {\textbar} {SCGP}}},
  \url{http://scgp.stonybrook.edu/archives/7874}.

\bibitem{Josza94}
R.~Jozsa.
\newblock Fidelity for mixed quantum states.
\newblock {\em J. Mod. Opt.}, 41(12):2315--2323, 1994.


\bibitem{FL13}
M.~M\"{u}ller-Lennert, F.~Dupuis, O.~Szehr, S.~Fehr, and M.~Tomamichel.
\newblock On quantum {R\`enyi} entropies: A new generalization and some
  properties.
\newblock {\em Journal of Mathematical Physics}, 54(12), 2013,
  \href{http://arxiv.org/abs/1306.3142}{{\ttfamily arXiv:1306.3142}}.

\bibitem{FC94}
C.~A. Fuchs and C.~M. Caves.
\newblock Ensemble-dependent bounds for accessible information in quantum
  mechanics.
\newblock {\em Phys. Rev. Lett.}, 73:3047--3050, Dec 1994.

\bibitem{BK02}
H.~Barnum and E.~Knill.
\newblock Reversing quantum dynamics with near-optimal quantum and classical
  fidelity.
\newblock {\em J. Math. Phys.}, 43(5):2097--2106, 2002,
  \href{http://arxiv.org/abs/quant-ph/0004088}{{\ttfamily
  arXiv:quant-ph/0004088}}.


\bibitem{qSanov}
I.~Bjelakovi{\'c}, J.-D. Deuschel, T.~Kr{\"u}ger, R.~Seiler,
  R.~Siegmund-Schultze, and A.~Szko{\l}a.
\newblock A quantum version of {Sanov's} theorem.
\newblock {\em Commun. Math. Phys.}, 260(3):659--671, 2005,
  \href{http://arxiv.org/abs/quant-ph/0412157}{{\ttfamily
  arXiv:quant-ph/0412157}}.

\bibitem{HT14}
M.~Hayashi and M.~Tomamichel.
\newblock Correlation detection and an operational interpretation of the renyi
  mutual information, 2014,  \href{http://arxiv.org/abs/1408.6894}{{\ttfamily
  arXiv:1408.6894}}.

\bibitem{Opp08}
J.~Oppenheim.
\newblock State redistribution as merging: introducing the coherent relay,
  \href{http://arxiv.org/abs/0805.1065}{{\ttfamily arXiv:0805.1065}}.

\bibitem{YD09}
J.~Yard and I.~Devetak.
\newblock Optimal quantum source coding with quantum information at the encoder
  and decoder.
\newblock {\em IEEE Trans. Inf. Theory}, 55(11):5339--5351, Nov 2009,
  \href{http://arxiv.org/abs/0706.2907}{{\ttfamily arXiv:0706.2907}}.
  
  \bibitem{AE05}
K.~M.~R. Audenaert and J.~Eisert.
\newblock Continuity bounds on the quantum relative entropy.
\newblock {\em J. Math. Phys.}, 46(10):--, 2005,
  \href{http://arxiv.org/abs/quant-ph/0503218}{{\ttfamily
  arXiv:quant-ph/0503218}}.

\bibitem{Piani09}
M.~Piani.
\newblock Relative entropy of entanglement and restricted measurements.
\newblock {\em Phys. Rev. Lett.}, 103:160504, Oct 2009,
  \href{http://arxiv.org/abs/0904.2705}{{\ttfamily arXiv:0904.2705}}.


\bibitem{Ren05}
R.~Renner.
\newblock {\em Security of quantum key distribution}.
\newblock PhD thesis, ETHZ, Zurich, 2005,
  \href{http://arxiv.org/abs/quant-ph/0512258}{{\ttfamily
  arXiv:quant-ph/0512258}}.

\bibitem{Har05}
A.~W. Harrow.
\newblock {\em Applications of coherent classical communication and {S}chur
  duality to quantum information theory}.
\newblock PhD thesis, M.I.T., Cambridge, MA, 2005,
  \href{http://arxiv.org/abs/quant-ph/0512255}{{\ttfamily
  arXiv:quant-ph/0512255}}.

\bibitem{CKR09}
M.~Christandl, R.~Koenig, and R.~Renner.
\newblock Post-selection technique for quantum channels with applications to
  quantum cryptography.
\newblock {\em Phys. Rev. Lett.}, 102:020504, 2009,
  \href{http://arxiv.org/abs/0809.3019}{{\ttfamily arXiv:0809.3019}}.



\end{thebibliography}
\end{document}